\let\oldnl\nl
\newcommand{\nonl}{\renewcommand{\nl}{\let\nl\oldnl}}
\numberwithin{equation}{section}
\declaretheoremstyle[bodyfont=\it,qed=\qedsymbol]{noproofstyle}
\declaretheorem[name=Observation,numbered=no]{observation*}
\declaretheorem[numberlike=equation]{subclaim}
\declaretheorem[numberlike=equation]{theorem}
\declaretheorem[name=Theorem,numbered=no]{theorem*}
\declaretheorem[numberlike=equation]{lemma}
\declaretheorem[name=Lemma,numbered=no]{lemma*}
\declaretheorem[numberlike=equation]{corollary}
\declaretheorem[name=Corollary,numbered=no]{corollary*}
\declaretheorem[name=Proposition,numbered=no]{proposition*}
\declaretheorem[name=Claim,numberlike=equation]{claim}
\declaretheorem[name=Claim,numbered=no]{claim*}
\declaretheorem[name=Conjecture,numbered=no]{conjecture*}
\declaretheorem[name=Question,numbered=no]{question*}
\declaretheoremstyle[bodyfont=\it,qed=$\lrcorner$]{defstyle} 
\declaretheorem[numberlike=equation,style=defstyle]{definition}
\declaretheorem[unnumbered,name=Definition,style=defstyle]{definition*}
\declaretheorem[unnumbered,name=Example,style=defstyle]{example*}
\declaretheorem[unnumbered,name=Notation=defstyle]{notation*}
\declaretheorem[unnumbered,name=Construction,style=defstyle]{construction*}
\declaretheorem[unnumbered,name=Remark,style=defstyle]{remark*}
\newcommand{\shortECCC}[2]{\texttt{\href{http://eccc.hpi-web.de/report/\ifnumcomp{#1}{>}{93}{19}{20}#1/#2/}{eccc:TR#1-#2}}}
\newcommand{\parseECCC}[1]{
\StrSubstitute{#1}{TR}{}[\tmpstring]%
\IfSubStr{\tmpstring}{/}{ 
\StrBefore{\tmpstring}{/}[\ecccyear]%
\StrBehind{\tmpstring}{/}[\ecccreport]%
}{
\StrBefore{\tmpstring}{-}[\ecccyear]%
\StrBehind{\tmpstring}{-}[\ecccreport]%
}%
\shortECCC{\ecccyear}{\ecccreport}}
\newif\ifnote
\let\ifnote\iffalse
\newcommand{\phnote}[1]{\todo[color=red!100!green!33,
  size=\footnotesize]{ph: #1}}
\newcommand{\sriprahladhuvacha}[1]{\todo[color=red!100!green!33,inline,size=\small]{ph: #1}}
\newcommand{\ASnote}[1]{\textcolor{OliveGreen}{\guillemotleft Ashutosh: #1\guillemotright}}
\newcommand{\RPnote}[1]{\textcolor{BrickRed}{\guillemotleft RP: #1\guillemotright}}
\newcommand{\gitinfonotecolour}{Gray}
\newcommand{\phnote}[1]{}
\newcommand{\sriprahladhuvacha}[1]{}
\newcommand{\TMnote}[1]{}
\newcommand{\ASnote}[1]{}
\newcommand{\RPnote}[1]{}
\newcommand{\gitinfonotecolour}{white}
\newcommand{\ehref}[1]{\href{mailto:#1}{\texttt{#1}}}
\renewcommand{\E}{\operatorname{\mathbb{E}}}
\newcommand{\AC}{\mathsf{AC}}
\let\epsilon\varepsilon
\newcommand{\CDT}{\operatorname{CDT}}
\newcommand{\dom}{\operatorname{dom}}
\newcommand{\cor}{\operatorname{Cor}}
\newcommand{\stars}{\operatorname{stars}}
\newcommand{\calR}{\mathcal{R}}
\newcommand{\DT}{\mathsf{DT}}
\newcommand{\hastad}{H\aa stad}
\newcommand{\cleq}{\preccurlyeq}
\newcommand{\prob}[2]{\Pr_{#1}\left[#2\right]}
\newcommand{\zo}{\{0,1\}}
\newcommand{\bra}[1]{\left\{#1\right\}}
\newcommand{\cbra}[1]{\left(#1\right)}
\newcommand{\rhotilde}{\Tilde{\rho}}
\newcommand{\rhotildel}{\Tilde{\rho}_\ell}
\newcommand{\ptilde}{\Tilde{p}}
\newcommand{\Dtilde}{\Tilde{D}}
\newcommand{\calRtilde}{\Tilde{\calR}}
\newcommand{\dist}{\Tilde{\calR}_p(F)}
\newcommand{\bsigma}{\boldsymbol{\sigma}}
\newcommand{\bS}{\boldsymbol{S}}
\newcommand{\Stilde}{{\Tilde{S}}}
\newcommand{\bStilde}{\boldsymbol{\Tilde{S}}}
\newcommand{\bbeta}{\boldsymbol{\beta}}
\newcommand{\btau}{\boldsymbol{\tau}}
\newcommand{\balpha}{\boldsymbol{\alpha}}
\newcommand{\balphap}{\boldsymbol{\alpha'}}
\newcommand{\balphapp}{\boldsymbol{\alpha''}}
\newcommand{\bDtilde}{\boldsymbol{\Tilde{D}}}
\newcommand{\bGamma}{\boldsymbol{\Gamma}}
\newcommand{\brho}{\boldsymbol{\rho}}
\newcommand{\brhotilde}{\boldsymbol{\Tilde{\rho}}}
\newcommand{\calF}{\mathcal{F}}
\newcommand{\calA}{\mathcal{A}}
\newcommand{\calB}{\mathcal{B}}
\newcommand{\calC}{\mathcal{C}}
\newcommand{\calT}{\mathcal{T}}
\newcommand{\calE}{\mathcal{E}}
\newcommand{\size}{\operatorname{size}}
\newcommand{\depth}{\operatorname{depth}}
\newcommand{\ass}{\operatorname{assoc}}
\newcommand{\ZASS}{\operatorname{ASSOC}_0}
\newcommand{\bigo}[1]{O\left(#1\right)}
\newcommand{\gitinfonote}{git info:~\gitAbbrevHash\;,\;(\gitAuthorIsoDate)\; \;\gitVtag}
\title{Criticality of $\AC^0$-Formulae}
\date{}
\author{
    Prahladh Harsha
    \thanks{Tata Institute of Fundamental Research, Mumbai, India. Email: \ehref{prahladh@tifr.res.in}, \ehref{tulasimohanm@gmail.com}, \ehref{ashushankar98@gmail.com}. Research supported by the Department of Atomic
    Energy, Government of India, under project 12-R\&D-TFR-5.01-0500. Research of second author partially supported through the MATRICS grant MTR/2019/001226 of the Science and Engineering Research Board, Department of Science and Technology, Government of India.} 
\and
    Tulasimohan Molli
    \samethanks
\and 
    Ashutosh Shankar
    \samethanks
}
\begin{document}

\maketitle


{\let\thefootnote\relax
\footnotetext{\textcolor{\gitinfonotecolour}{\gitinfonote}
}}

\begin{abstract}
Rossman [In \emph{Proc.\ $34$th Comput.\ Complexity Conf.}, 2019] introduced the notion of \emph{criticality}. The criticality of a Boolean function $f \colon \zo^n \to \zo$ is the minimum $\lambda \geq 1$ such that for all positive integers $t$, 
\[ \Pr_{\brho \sim \calR_p}\left[\DT_{\depth}(f|_{\brho}) \geq t\right] \leq (p\lambda)^t. \]
\hastad's celebrated switching lemma shows that the criticality of any $k$-DNF is at most $O(k)$. Subsequent improvements to correlation bounds of $\AC^0$-circuits against parity showed that the criticality of any $\AC^0$-\emph{circuit} of size $S$ and depth $d+1$ is at most $O(\log S)^d$ and any \emph{regular} $\AC^0$-\emph{formula} of size $S$ and depth $d+1$ is at most $O(\frac1d \cdot \log S)^d$. We strengthen these results by showing that the criticality of \emph{any} $\AC^0$-formula (not necessarily regular) of size $S$ and depth $d+1$ is at most $O(\frac{\log S}{d})^d$, resolving a conjecture due to Rossman. 

This result also implies Rossman's optimal lower bound on the size of any depth-$d$ $\AC^0$-formula computing parity [Comput.\ Complexity, 27(2):209--223, 2018.]. Our result implies tight correlation bounds against parity, tight Fourier concentration results and improved \#SAT algorithm for $\AC^0$-formulae. 
\end{abstract}

\section{Introduction}

Understanding the power of various models of computation is the central goal of complexity theory. With respect to small-depth AND-OR circuits, the early works of Furst, Saxe and Sipser \cite{FurstSS1984}, Sipser \cite{Sipser1983-borel}, Ajtai \cite{Ajtai1983}, Yao \cite{Yao1985} and \hastad\  \cite{Hastad1989} using \emph{random restrictions} and Razborov \cite{Razborov1987} and Smolensky \cite{Smolensky1987} using the \emph{polynomial method} laid out a promising direction. 

Furst, Saxe and Sipser \cite{FurstSS1984} and Ajtai \cite{Ajtai1983} independently proved that the parity function requires super-polynomial sized constant depth AND-OR circuits to compute it. This was then later improved by Yao \cite{Yao1985} and \hastad\ \cite{Hastad1989} who proved that any depth-$(d+1)$, AND-OR circuit computing parity on $n$ bits requires size $2^{n^{\Theta(\nicefrac1d)}}$. The pièce de résistance of these results is the \emph{switching lemma} method introduced by Furst, Saxe and Sipser \cite{FurstSS1984}. Informally stated, it states that any $k$-DNF\footnote{A $k$-DNF is a Boolean formula in disjunctive normal form (DNF) in which each term has at most $k$ literals. A $k$-CNF is defined similarly.} reduces (aka \emph{switches}) to a low-width CNF with high probability, when acted upon by a $p$-random restriction. Very soon (in \hastad's paper itself \cite{Hastad1989}), it was discovered that it was more convenient and useful to state the switching lemma in terms of the depth of decision trees. This leads us to \hastad's switching lemma, one of the most celebrated theorems in theoretical computer science. Let $f$ be a $k$-DNF and $\calR_p$ denote the distribution of $p$-random restrictions ($p \in [0,1]$) where each variable independently is left unrestricted with probability $p$ and otherwise set uniformly to 0 or 1. Then,
\[ \prob{\rho \sim \calR_p}{\DT_{\text{depth}}(f|_\rho)\geq s} \leq (5pk)^s \ .\]

Despite the immense success of these methods in proving optimal lower bounds for small-depth AND-OR circuits and related models, they did not help in understanding limits of considerably stronger computational models. It was soon discovered that "other techniques" are needed to tackle these stronger models and this was made formal in the \emph{natural proof} approach by Razborov and Rudich \cite{RazborovR1997}. About a decade ago, interest in an \emph{improved} switching lemma was revived while trying to understand optimal correlation bounds of small depth AND-OR circuits with the parity function. While the early results of Ajtai \cite{Ajtai1983} were only able to show a correlation bound of $\exp(-\Omega(n^{1-\epsilon}))$, Beame, Impagliazzo and Srinivasan \cite{BeameIS2012} proved a considerably smaller correlation bound of $\exp(-\Omega(n/2^{2d(\log S)^{\nicefrac45}}))$ for depth-$d$ AND-OR circuits of size $S$ with the parity function over $n$ bits. This was then improved by Impagliazzo, Matthews and Paturi \cite{ImpagliazzoMP2012} and \hastad\ \cite{Hastad2014} who proved the optimal correlation bound of $\exp(-\Omega(n/(\log S)^d))$ for depth-$(d+1)$ AND-OR circuits of size $S$ with the $n$-bit parity function. \hastad\ proved this optimal correlation bound by proving the \emph{multi-switching lemma}, a significant strengthening of his earlier switching lemma. The multi-switching lemma is best described in terms of \emph{criticality}, a notion introduced subsequently by Rossman \cite{Rossman2019}.

The \emph{criticality} of a Boolean function $f \colon \zo^n \to \zo$ is the minimum $\lambda \geq 1$ such that 
\[ \Pr_{\rho \sim \calR_p}\left[\DT_{\depth}(f|_\rho) \geq s\right] \leq (p\lambda)^s. \]
Thus, \hastad's switching lemma in terms of criticality states that $k$-DNFs (and $k$-CNFs) have criticality $O(k)$. The Multi-switching lemma (in Rossman's reformulation in terms of criticality) states that a depth-$(d+1)$ AND-OR circuit of size $S$ has criticality $O(\log S)^d$. In the same paper, Rossman \cite{Rossman2019} showed a stronger result that depth $(d+1)$ \emph{regular} AND-OR \emph{formulae} of size $S$ have criticality $O\left(\frac{\log S}{d}\right)^d$, where regular means that all gates at the same height have equal fan-in. Parallel and independent of this line of work involving correlation bounds with parity, Rossman \cite{Rossman2018} showed that any depth-$(d+1)$ AND-OR formula (not necessarily regular) that computes the $n$-bit parity requires size at least $2^{\Omega(d(n^{\nicefrac1d}-1))}$. Our main result is a common strengthening (and unification) of all the above mentioned results, where we prove that any (not necessarily regular) depth $(d+1)$ AND-OR formula of size $S$ has criticality $O\left(\frac{\log S}{d}\right)^d$. More precisely, 

\begin{theorem}\label{thm:AC0criticality}
Let $F$ be an AND-OR formula of depth $d+1$ and size at most $S$, then for any $p \in [0,1]$ 
\[ \prob{\rho \sim \calR_p}{\DT_{\text{depth}}(F|_\rho) \geq s} \leq \left(p\cdot O\left(32^{d}\left(\frac{\log S}{d} + 1\right)^d\right)\right)^s.\]
\end{theorem}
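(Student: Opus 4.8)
The plan is to prove \Cref{thm:AC0criticality} by induction on the depth $d+1$, following the random-projection / multi-switching paradigm of \hastad\ and Rossman, but carefully avoiding the regularity hypothesis. The base case is a bottom layer of AND/OR gates of fan-in at most $S$: here \hastad's switching lemma already gives criticality $O(\log S)$ after observing that a gate of fan-in $w$ which survives a $p$-restriction with $p \ll 1/\log S$ behaves, except with probability $(p \cdot O(\log S))^{\text{depth}}$, like a decision tree of small depth. The inductive step is where the real work lies: given a depth-$(d+1)$ formula $F$, we want to apply a random restriction $\brho \sim \calR_p$, argue that with high probability the bottom two layers collapse (simultaneously, over all bottom subformulae) to shallow decision trees, absorb those decision trees into the layer above, and thereby obtain a depth-$d$ formula on the remaining live variables to which we apply the inductive hypothesis. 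The composition of restrictions $\calR_{p_0} \circ \calR_{p_1} \circ \cdots$ with $\prod p_i = p$ then yields the stated bound, and the constants $32^d$ and $(\frac{\log S}{d}+1)^d$ come from tracking the per-level loss and the fact that we get to split the ``budget'' $\log S$ across $d$ levels.

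The key steps, in order, are: (1) set up the right notion of a \emph{common partial decision tree} (a single decision tree $T$ of small depth such that restricting $F$ along any root-to-leaf path of $T$ leaves every bottom subformula computable by a depth-$t$ decision tree) — this is the multi-switching formulation, and it is what lets the bottom two layers merge cleanly; (2) prove a multi-switching lemma for a \emph{set} of bottom-level formulae of total size $S$, showing $\Pr_{\brho}[\text{no common partial decision tree of depth} < s] \le (p \cdot O(\log S))^s$ — here the total-size bound, rather than a per-formula width bound, is exactly what replaces regularity; (3) show that after conditioning on such a common partial decision tree, the formula, restricted along any path, is a depth-$d$ formula of size still at most $S$ (the decision trees of depth $t$ get folded into the gates above without blowing up the size, because a formula — unlike a circuit — has bounded fan-out, so each bottom subformula feeds a unique parent); (4) apply the induction hypothesis at depth $d$ with a fresh restriction, and (5) chain the restrictions, choosing $p_i \approx p^{1/(d+1)}$ or more precisely balancing so that the product of the per-level criticalities telescopes to $O(32^d (\frac{\log S}{d}+1)^d)$.

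The main obstacle — and the crux of the whole argument — is step (3) combined with step (2): making the \emph{non-regular} case work. In Rossman's regular-formula argument, all gates at a given height have the same fan-in, so one can choose the restriction parameter $p_i$ at level $i$ as a clean function of that common fan-in and get a uniform switching bound. Without regularity, fan-ins at the same level vary wildly, and a naive choice of $p_i$ either wastes budget on the narrow gates or fails to switch the wide ones. The fix I anticipate is to measure progress not by fan-in but by \emph{size}: one chooses the restriction so that, in expectation, the total size of the surviving bottom structure shrinks geometrically, and one uses a multi-switching lemma whose failure probability depends only on $\log(\text{total size})$ — so that a few wide gates are paid for by the many narrow ones. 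Formally this requires a weighted/amortized version of the entropy-compression (or Razborov-style encoding) argument underlying the switching lemma, where the ``advice'' needed to recover the restriction is charged against the logarithm of the size of the piece being switched, and these charges sum correctly across a non-uniform layer. Getting the bookkeeping in this amortized encoding to close — with the $\frac{1}{d}$ savings surviving — is the technical heart of the theorem and is presumably where the bulk of the paper's effort goes.
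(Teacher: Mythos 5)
Your outline reproduces the general multi-switching paradigm, but the step you yourself identify as ``the technical heart of the theorem'' --- making the switching argument work for a layer of wildly varying fan-ins with only a \emph{total} size bound --- is left as a conjecture about a ``weighted/amortized encoding argument'' that you do not carry out. Since that is exactly the content of the theorem (everything else in your sketch was already available from \hastad\ 2014 and Rossman 2019), the proposal has a genuine gap at its core. There are also two concrete problems with the steps you do specify. First, step (3) is not sound as stated: after switching, each bottom subformula is a depth-$t$ decision tree, and re-expressing it as a width-$t$ CNF/DNF to merge with the gate above multiplies the size by up to $2^t$; bounded fan-out prevents double-counting but not this blow-up, and you need the size to remain essentially $S$ through $d$ rounds for the product to telescope to $\bigl(\frac{\log S}{d}+1\bigr)^d$. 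Second, the schedule $p_i\approx p^{1/(d+1)}$ cannot give a bound of the form $(p\lambda)^s$ uniformly in $p$; the inner restriction parameters must be fixed constants of the form $1/O(\lambda(G))$ depending only on the subformula $G$, with only the outermost parameter equal to $p$.

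For comparison, the paper takes a different route that avoids both depth reduction and encoding. It defines a canonical decision tree for $F$ recursively over the formula tree (with a restriction tree assigning a progressively coarser restriction to each subformula), and proves by induction on depth \emph{and} number of variables that $\Pr[\CDT^{(a)}_0(F,\brhotilde)\text{ exists}\mid \brhotilde\in\calT]\le(p\lambda(F))^s$ for every \emph{downward-closed} set $\calT$ of restriction trees; the downward-closed conditioning (borrowed from \hastad's multi-switching proof) is what lets the inductive hypothesis be applied to the conditional probabilities after the event is factored into pieces. Non-regularity is then handled not by amortized compression but by a purely probabilistic observation: the events indexed by the ``responsible'' child $F_\ell$ and its witnessing instruction string are mutually disjoint, so their probabilities $\nu(\ell)$ form a subdistribution, and two applications of Jensen's inequality together with AM--GM turn $\sum_\ell \nu(\ell)\bigl(\lambda(F_\ell)\log\frac{1}{\nu(\ell)}\bigr)^r$ into $\bigl(\frac{\log(2^dS)}{d}\bigr)^{dr}$ --- the term $\log\frac{1}{\nu(\ell)}$ supplies the ``extra level'' that the per-level fan-in accounting supplies in the regular case. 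If you want to pursue your own route, the amortized encoding step is the part you would need to invent from scratch, and you should expect it to be at least as delicate as this Jensen argument.
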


As an immediate corollary of the above criticality result and \cite[Theorem 14]{Rossman2019}, we get the following results for general AND-OR formulae of depth $(d+1)$. Rossman had proved similar results for regular AND-OR formulae of depth $(d+1)$ \cite{Rossman2019}.

\begin{corollary}
Let $f\colon \zo^n\to \zo$ be computable by an AND-OR formula of depth $d+1$ and size at most $S$. Then
\begin{enumerate}
    \item Decision tree size bounds: $\DT_{\text{size}}(f) \leq O\left(2^{\left(1-\nicefrac1{O\left(\frac{1}{d}\log S\right)^d}\right)n}\right)$,
    \item Correlation bound with parity: $\cor(f,\oplus_n)\leq O\left(2^{-n/O\left(\frac{1}{d}\log S\right)^d}\right)$,
    \item Degree bounds:  $\prob{\rho \sim \calR_p}{\deg(F|_\rho) \geq s} \leq \left(p\cdot O\left(32^{d}\left(\frac{1}{d}\log S + 1\right)^d\right)\right)^s.$     
    \item $\ell_2$-Fourier concentration (Linial-Mansour-Nisan \cite{LinialMN1993}):  $\sum_{S \subseteq [n]\colon |S| \geq k} \widehat{f}(S)^2 \leq 2e \cdot e^{-k/O\left(\frac{1}{d}\log S\right)^d}$,
    \item $\ell_1$-Fourier concentration (Tal \cite{Tal2017}): $\sum_{S \subseteq [n]\colon |S| =k} \abs{\widehat{f}(S)} \leq O\left(\frac{1}{d}\log S\right)^{dk}$.
\end{enumerate}
\end{corollary}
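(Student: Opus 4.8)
The five items are standard downstream consequences of a criticality upper bound, so the only genuine input is \cref{thm:AC0criticality}; the rest is a matter of feeding its conclusion into \cite[Theorem~14]{Rossman2019}, which converts a criticality bound of $\lambda$ into precisely these estimates (with $\lambda$, or $1/O(\lambda)$, in place of the explicit formula). The plan is: first fix notation, then dispatch item~3 directly from \cref{thm:AC0criticality}, and finally obtain items~1, 2, 4, 5 from \cite[Theorem~14]{Rossman2019}. Since $F$ computes $f$, for every restriction $\rho$ the restricted formula $F|_\rho$ computes $f|_\rho$, so $\DT_{\text{depth}}(f|_\rho)=\DT_{\text{depth}}(F|_\rho)$, and \cref{thm:AC0criticality} says exactly that $f$ has criticality at most
\[ \lambda \;:=\; c\cdot 32^{d}\Bigl(\tfrac{\log S}{d}+1\Bigr)^{d} \;=\; O\!\left(\tfrac1d\log S\right)^{d} \]
for an absolute constant $c$, where in the last step $O(x)^d$ is read as $(O(x+1))^{d}$, which absorbs the $32^{d}$ factor since $32^{d}(\tfrac1d\log S+1)^{d}=\bigl(\tfrac{32\log S+32d}{d}\bigr)^{d}$ has the same shape as $\bigl(\tfrac{O(\log S+d)}{d}\bigr)^{d}$.

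Item~3 needs nothing beyond \cref{thm:AC0criticality}. For any Boolean function $g$ computed by a depth-$t$ decision tree one has $\deg(g)\le t$ (write $g$ as the sum, over leaves, of the product of the at most $t$ literals on the path), hence $\deg(g)\le\DT_{\text{depth}}(g)$ always. Therefore the event $\{\deg(F|_\rho)\ge s\}$ is contained in $\{\DT_{\text{depth}}(F|_\rho)\ge s\}$, and monotonicity of probability turns \cref{thm:AC0criticality} into the stated bound $\prob{\rho\sim\calR_p}{\deg(F|_\rho)\ge s}\le\bigl(p\cdot O(32^{d}(\tfrac1d\log S+1)^{d})\bigr)^{s}$.

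Items~1, 2, 4 and 5 then follow from \cite[Theorem~14]{Rossman2019} applied to $f$ with criticality parameter $\lambda$, substituting $\lambda=O(\tfrac1d\log S)^{d}$ (for item~5 one also uses $\lambda^{k}=O(\tfrac1d\log S)^{dk}$). For completeness I would recall the short random-restriction proofs underlying that theorem, none of which presents any difficulty. (i)~For the decision-tree size bound, choose $p=\Theta(1/\lambda)$ so that, by the criticality tail bound, $\E_{\brho\sim\calR_p}\bigl[2^{\DT_{\text{depth}}(f|_{\brho})}\bigr]\le\sum_{s\ge0}(2p\lambda)^{s}=O(1)$; realizing $\brho$ as first fixing a random $(1-p)$-fraction of the coordinates and then assigning them, a Markov plus Chernoff argument produces a single set $B$ of $\le(1-\Theta(1/\lambda))n$ coordinates whose complementary subfunctions $f|_{B\to z}$ have optimal decision trees of total size $O(2^{|B|})$, and stacking those under a full query of $B$ gives $\DT_{\text{size}}(f)\le O(2^{(1-\Theta(1/\lambda))n})$. (ii)~For the $\ell_2$-concentration, combine the Fourier identity $\E_{z}[\widehat{f|_\rho}(T)^{2}]=\sum_{S:\,S\cap J=T}\widehat f(S)^{2}$ (for live set $J$) with the bound $\sum_{|T|\ge j}\widehat{f|_\rho}(T)^{2}\le\mathbf 1[\DT_{\text{depth}}(f|_\rho)\ge j]$ and a binomial tail bound for $|S\cap J|$ when $|S|\ge k$, and apply the criticality estimate at an appropriate level $j=\Theta(pk)$ with $p=\Theta(1/\lambda)$. (iii)~The parity correlation is the $k=n$ instance of (ii) via $\cor(f,\oplus_{n})=|\widehat f([n])|\le\bigl(\sum_{|S|\ge n}\widehat f(S)^{2}\bigr)^{1/2}$. (iv)~The $\ell_{1}$-growth bound is Tal's argument: bound $\sum_{|S|=k}|\widehat f(S)|$ by $p^{-k}\,\E_{\brho}\bigl[\sum_{|T|=j}|\widehat{f|_{\brho}}(T)|\bigr]$, use $\sum_{|T|=j}|\widehat g(T)|\le\DT_{\text{depth}}(g)^{j}$, and sum the criticality tail.

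There is no real obstacle here: all the difficulty lies in \cref{thm:AC0criticality}, and the corollary is bookkeeping. The only points that warrant a moment's care are (a)~verifying that the $32^{d}$ of \cref{thm:AC0criticality} is genuinely absorbed into the $O(\tfrac1d\log S)^{d}$ notation of the corollary, and (b)~noting that item~3, unlike the other four, is not an instance of \cite[Theorem~14]{Rossman2019} but a direct consequence of \cref{thm:AC0criticality} together with the trivial inequality $\deg\le\DT_{\text{depth}}$.
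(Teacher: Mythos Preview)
Your proposal is correct and matches the paper's own treatment: the paper also derives the corollary as an immediate consequence of \cref{thm:AC0criticality} together with \cite[Theorem~14]{Rossman2019}, without spelling out any further argument. Your additional observation that item~3 follows directly from $\deg\le\DT_{\text{depth}}$ rather than needing \cite[Theorem~14]{Rossman2019}, and your sketches of the random-restriction arguments behind the remaining items, go beyond what the paper provides but are consistent with it.
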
 

As indicated before, \cref{thm:AC0criticality} unifies (and arguably simplifies) all previous results for $\AC^0$ circuits and formulae in this context. It also yields satisfiability results for $\AC^0$-formulae along the lines of the Impagliazzo, Matthews and Paturi result \cite{ImpagliazzoMP2012} (see \cref{sec:satisfiability} for more details).

\subsection{Proof Overview}

The proof of \cref{thm:AC0criticality} is an adaptation of the proof of the multi-switching lemma due to \hastad\ \cite{Hastad2014} and Rossman's proof in the regular case \cite{Rossman2019}. As is typical in all proofs of the switching lemma, we construct a canonical decision tree (CDT) for the depth-$d$ formula. This CDT under a restriction is constructed in an inductive fashion by progressively refining the restriction. The main theorem is proved via the following statement (see \cref{lem:main} for the exact statement), which we prove inductively. For any $s$-length bitstring $a \in \zo^s$,
\[ \prob{\brho}{ \text{ There exists a path in }\CDT(F,\brho) \text{ labelled by the instruction-set } a \mid \brho \in \calT} \leq (p \cdot \lambda(F))^s,\]
where $\calT$ is any family of downward-closed set of restrictions and $\lambda(F)$ is the desired criticality bound that we wish to prove. 
The crucial difference from Rossman's proof in the regular setting is we prove the above statement subject to $\brho$ belonging to any downward-closed set. As in Rossman's proof, the event "There exists a path
...." is broken into several subevents, $\calE_t$ for $t$ ranging over a polynomially large set and each of the events $\calE_t$ is typically a conjunction of 3 events $\calA_t, \calB_t$ and $\calC_t$. The required probability can then bounded by an expression as follows:
\[ \sum_{t}\prob{}{ \calA_t \cap \calB_t \cap \calC_t \mid \calT} = \sum_t \prob{}{\calA_t \mid \calT}\cdot \prob{}{\calB_t \mid \calA_t \cap \calT}\cdot \prob{}{\calC_t \mid \calA_t \cap \calB_t \cap \calT}.\]
The advantage of using conditioning is that some of these intermediate probabilities (c.f., $\prob{}{\calC_t \mid \calA_t \cap \calB_t \cap\calT}$) can be bound using the inductive assumption provided the conditioned events are themselves downward-closed. The events $\calA_t$ and $\calB_t$ are chosen such that this is indeed the case. The sum over $t$ is then handled via convexity. The use of downward-closed sets to prove the inductive claim is inspired from \hastad's use of downward-closed sets in his proof of the multi-switching lemma \cite{Hastad2014}. However, the situation for depth-$d$ formulae is considerably more involved than the DNF/CNF setting and both the choice of the events as well as bounding these conditional probabilities require considerable care and subtlety (see \cref{sec:downward} and \cref{clm:hastad}). The use of downward-closed sets considerably simplifies the proof and yields an arguably simpler proof of the criticality bound, even in the regular setting \cite{Rossman2019}.

\subsection*{Organization}
The rest of the paper is organized as follows. We begin with some preliminaries in \cref{sec:prelims}, where we recall the standard notions of restrictions, decision trees and introduce variants of these notions such as restriction trees etc, which will be of use later. We then define canonical decision trees for depth-$d$ formulae in \cref{sec:CDT}, identical to the corresponding notion in the regular setting due to Rossman \cite{Rossman2019}. We then demonstrate the downward-closedness of some properties related to CDTs in \cref{sec:downward} and finally prove the main theorem in \cref{sec:criticality}. In \cref{sec:satisfiability}, we use the main lemma to give a randomized \#SAT algorithm for arbitrary $\AC^0$ formulae generalizing the corresponding algorithm due to Rossman in the regular setting \cite{Rossman2019}.

\section{Preliminaries}\label{sec:prelims}

For a positive integer $n \in \N$, $[n]$ refers to the set $\{1,2,\ldots,n\}$. All logarithms in this paper are to base 2. 

While studying distributions $D$ over some finite set $\Sigma$, we will use bold letters (i.e., $\bsigma$) to distinguish a random sample according to $D$ from a fixed element $\sigma \in \Sigma$.  Given any distribution $D$ on a finite set $\Sigma$, we let $\mu_D\colon S \to [0,1]$ denote the corresponding probability distribution (i.e., $\mu_D(\sigma)=\prob{\bsigma \sim D}{\bsigma = \sigma}$). We will drop the subscript $D$ from $\mu_D$ typically.

We begin by recalling the definition of an $\AC^0$-formula.

\begin{definition}[$\AC^0$-formulae] \label{def: AC0}
Let $d$ be a non-negative integer. Let $V$ be a set of variable indices. An \emph{$\AC^0$-formula} $F$ of depth-$d$ over variables $V$ is (inductively) defined as follows: A depth-$0$ formula is the constant $0$ or $1$ or a literal $x_v$ or $\neg x_v$ where $v$ is a variable index. For $d\geq 1$, a depth-$d$ $\AC^0$-formula is either an OR-formula or an AND-formula which are defined below. A depth-$d$ OR-formula is of the form $F_1 \vee F_2 \vee \dots \vee F_m$ where the $F_i$'s are either depth-$d'$ AND-formulae for some $1 \leq d' <d$ or depth-$0$ formulae. A depth-$d$ AND-formula $F = F_1 \wedge F_2 \dots \wedge F_m$ is defined similarly. 

Depth-$1$ AND-formulae and OR-formulae are usually referred to as terms and clauses respectively, while depth-$2$ AND-formulae and OR-formulae are called DNFs and CNFs respectively. 

The size of a formula is given by the number of depth-$1$ sub-formulae\footnote{Traditionally, the size is defined by the number of leaves or depth-$0$ formulas but for this paper, it would be more convenient to work with this definition.}. More precisely, $\size(F)$ is inductively defined as
\begin{align*}
    \size(F) := \begin{cases}
        0 & \text{if }\depth(F) = 0\\
        1 & \text{if }\depth(F) = 1\\
        \sum_{i=1}^m \size(F_i) & \text{if } F=F_1 \vee F_2 \vee \dots \vee F_m \text{ or } F_1 \wedge F_2 \dots \wedge F_m.
    \end{cases}
\end{align*}
The variable index set $V$ of a given formula $F$ unless otherwise specified is always assumed to be $[n]$.

We will sometimes identify a formula $F$ with the Boolean function it computes. We say "$F\equiv 1$" if this Boolean function is a tautology and "$F\equiv 0$" if it is a contradiction.  
\end{definition}

\subsection{Restrictions, Decision Trees and Restriction Trees}
We will be chiefly concerned with restrictions. 

\begin{definition}[restriction]
Given a variable index set $V$, a \emph{restriction} $\rho$ is a function $\rho \colon V \to \{0,1,*\}$ or equivalently a partial function from $V$ to $\{0,1\}$. We refer to the domain of this partial function as $\dom(\rho)$ and the remaining set of unrestricted variables, namely $V \setminus \dom(\rho)$, as $\stars(\rho)$. 

We say that two restrictions $\rho_1$ and $\rho_2$ are consistent if for every $v \in \dom(\rho_1)\cap \dom(\rho_2)$, we have $\rho(v_1)=\rho(v_2)$. 

We can define a partial ordering among restrictions as follows: we say $\rho_1 \cleq \rho_2$ if (1) $\stars(\rho_1) \subseteq \stars(\rho_2)$ and (2) $\rho_1$ and $\rho_2$ are consistent. In words, $\rho_1$ only ``sets more variables'' than $\rho_2$. Sometimes, we will only be interested in this order with respect to a particular subset $T$ of the variable index set $V$. In this case, we say \[\rho_1 \cleq_T \rho_2 \text{ if (1) } \stars(\rho_1)\cap T \subset \stars(\rho_2) \cap T \text{ and (2) $\rho_1$ and $\rho_2$ are consistent.}\] 

Given a formula $F$ and a restriction $\rho$, the restricted formula $F|_\rho$ refers to the formula obtained by relabeling literals involving variables indices in $\dom(\rho)$ according to $\rho$ (we make no further simplification to the formula). Given two consistent restrictions $\rho_1,\rho_2$, $F|_{\rho_1,\rho_2}$ refers to the formula $(F|_{\rho_1})|_{\rho_2}$ (which is identical  to $(F|_{\rho_2})|_{\rho_1}$).
\end{definition}

It will sometimes be convenient to consider an ordering among the variables in the domain of a restriction, especially when studying restrictions arising from decision trees.

\begin{definition}[ordered restriction]
    An \emph{ordered restriction} on a variable set $V$ is a sequence of the form $\alpha = (x_{v_1} \rightarrow b_1, \ldots, x_{v_t}\rightarrow b_t)$ where $t\in \N$, $b_i \in \set{0,1}$, and $v_1,\ldots, v_t$ are distinct elements of $V$. We will use $\dom(\alpha)$ to refer to the set $\set{v_1,\ldots, v_t}$. (We will typically use $\alpha$ or $\beta$ for ordered restrictions.)

    Any ordered restriction can be interpreted as a restriction $\rho$ with $\dom(\rho) = \set{v_1,\ldots, v_t}$. Similarly, given a restriction $\rho$ on $V$, and an ordering on $\dom(\rho)$, we have a natural representation of $\rho$ as an ordered restriction on $V$.
\end{definition}

The following is the standard definition of a decision tree except that we allow the internal nodes of the tree to have (out-)degree either $1$ or $2$.

\begin{definition}[decision tree]
\label{def:decision tree}
A \emph{decision tree} is a a finite rooted binary tree where 
\begin{itemize}\itemsep0pt
    \item each internal node is labelled by a variable, has one or two children and the edges to its children have distinct labels from the set $\{0,1\}$, 
    \item the leaves are labelled by $0$ or $1$, and
    \item the variables appearing in any root-to-leaf path are distinct.
\end{itemize}
For each node $v$ (including leaf node), the root-to-node path in the decision tree naturally corresponds to an ordered restriction, which we denote by $\alpha^\Gamma_v$ (this restriction is non-trivial for every non-root node). 

The \emph{depth} of a decision tree $T$, denoted by $\depth(T)$, is defined as the maximum number of degree-$2$ nodes along any root-to leaf path in $T$. Note that this may be shorter than the length of the corresponding ordered restriction, which includes the degree-$1$ nodes also.

A decision tree is said to compute a Boolean function $F\colon \zo^V\to \zo$ under a restriction $\rho$ if the following conditions hold
\begin{itemize}
    \item any internal vertex labelled by a variable index, say $v$, that is in $\dom(\rho)$ has degree one, with the edge to the only child labelled with $\rho(v)$,
    \item any internal vertex labelled by a variable index in $\stars(\rho)$ has degree two and
    \item for every leaf $v$, we have $F|_{\rho,\alpha_v} \equiv \text{label}(v)$.\qedhere
\end{itemize}
\end{definition}

An "honest-to-god" decision tree (with all internal nodes having degree two) can be obtained from the above decision tree by contracting the degree-1 edges. However, we will find it convenient to keep this information about degree-1 nodes while constructing decision trees for functions under a restriction. Note that if a decision tree $T$ computes a function $F$ under the restriction $\rho$, then the contracted decision tree $T'$ computes the function $F|_\rho$.

To prove the criticality bound for a given formula $F$, we construct a canonical decision tree ($\CDT$) for $F$ under a (random) restriction $\rho$. This $\CDT$ is constructed in an inductive fashion by constructing the $\CDT$'s for $F$'s sub-formulae first and then using  these $\CDT$'s to construct $F$'s $\CDT$. While doing so, we progressively refine the restriction so that the final restriction under which the $\CDT$ is constructed is the target restriction $\rho$. This naturally leads us to the notion of \emph{restriction trees}, which is essentially a family of restrictions, one for each sub-formula of a given formula, such that the restrictions get refined as we move from child to parent in the formula tree.

\begin{definition}[restriction tree]\label{def:tree of restrictions}
Let $F$ be a formula on the variable index set $V$ and $T_F$ the set of all sub-formulae of $F$. The elements of $T_F$ have a natural bijection with the underlying formula tree of $F$. A restriction tree for $F$, denoted by $\rhotilde$, associates a restriction with each node in $T_F$, formally $\rhotilde\colon T_F \to \bra{0,1,*}^V$, such that for $G, H \in T_F$ where $G$ is a sub-formula of $H$, we have $\rhotilde(H) \cleq \rhotilde(G)$. In other words, the sequence of restrictions on any leaf-to-root path sets increasingly more variables as we approach the root. 

For any sub-formula $G$ of $F$, we let $\rhotilde|_G$ denote the restriction of $\rhotilde$ to the set $T_G$ of sub-formulae of $G$.
\end{definition}

We will use the "tilde" notation to distinguish between restrictions $\rho$ and restriction trees $\rhotilde$. Observe that, by definition, every restriction $\rho$ in a restriction tree $\rhotilde$ corresponding to a formula $F$ satisfies $\rho \cleq \rhotilde(F)$ and are hence consistent with each other. 

\subsection{Representation of restrictions and restriction trees}\label{sec:Sampleable}

Recall that a restriction $\rho$ is a partial function from the set  $V$ of variables to $\zo$. Sometimes (especially when dealing with \emph{random restrictions}), it will be convenient to work with a (redundant) representation of $\rho$ given by the pair $(\sigma, S)$ where $\sigma \colon V \to \zo$ is a \emph{global} assignment consistent with $\rho$ and $S = \stars(\rho)$. Note this is representation is redundant as we only need $\sigma|_{\overline{S}}$ to specify $\rho$. When sampling restrictions, it will be easier to sample the pair $(\sigma, S)$ from some distribution and set $\rho := \rho_{(\sigma, S)}$ to be the restriction given by 
\[\rho_{(\sigma,S)}(v) = \begin{cases}
        \sigma(v) & \text{ if } v \notin S,\\
        * & \text{ if } v \in S.
\end{cases}\] 

This representation naturally extends to restriction trees $\rhotilde \colon T_F \to \bra{0,1,*}$ which are given by a pair $(\sigma, \Stilde)$ where $\sigma \colon V \to \zo$ is a global assignment consistent with all the restrictions in the restriction tree and $\Stilde \colon T_F \to 2^V$ is defined as $\Stilde(G) := \stars(\rhotilde(G))$. Notice that any $\Stilde$ satisfies the monotonicity property that if $G$ is a sub-formula of $H$ in $T_F$, we have $\Stilde(H) \subseteq \Stilde(G)$. Given any such $\Stilde$ that satisfies the monotonicity property and a global assignment $\sigma \colon V \to \zo$, the corresponding restriction tree $\rhotilde_{(\sigma,\Stilde)}$ is given by $\rhotilde_{(\sigma,\Stilde)}(G) := \rho_{(\sigma, \Stilde(G))}$ for all $G \in T_F$.

\section{Canonical decision tree}\label{sec:CDT}

In this section, we construct a canonical decision tree (CDT) for a formula $F$. This definition is identical to Rossman's definition \cite[Definition 19]{Rossman2019} (except that Rossman defines it completely in terms of ordered restrictions while we define it using decision trees which have both degree-1 and degree-2 internal nodes).

Let us first recall the CDT construction for DNFs in the proof of \hastad's classical switching lemma \cite{Beame1994,Razborov1995,Hastad2014}. Let $F= T_1 \vee \dots \vee T_m$ be a DNF and $\rho$ a restriction on the variables of $F$. To construct $\CDT(F,\rho)$ we do the following: 
\begin{enumerate}
    \item Find the first term $T$ (from left to right), not forced to 0 by $\rho$. If there is no such term, return the tree comprising of a single leaf node labelled 0.
    \item If $T|_\rho \equiv 1$, return the tree comprising of a single leaf node labelled 1.
    \item Let $Y$ be the set of $\rho$-unrestricted variables in $T$. Let $\Gamma$ be the $\CDT(T,\rho)$ constructed from the complete balanced binary tree of depth $|Y|$ indexed by the variables of $Y$ and labelling the $2^{|Y|}$ appropriately. \label{item:balancing}
    \item For each leaf $v$ of $\Gamma$, inductively replace $v$ with $\CDT(F|_{\alpha_v},\rho)$ where $\alpha_v$ is the (ordered) restriction corresponding to leaf $v$.
\end{enumerate}

The construction of CDTs for depth-$d$ formulae will be inspired by the above CDT construction for DNFs. Note that in Step~\ref{item:balancing}, we used a complete binary tree instead of the best decision tree for the term $T$ (see \cref{fig: term CDT}). The rationale for doing this is because while proving the switching lemma, we wanted to attribute a 0-leaf in $\Gamma$ to a 1-leaf which shares the same set of variables. We will need a similar property in our construction. To this end, we perform a balancing operation which ensures that every 0-leaf has a corresponding 1-leaf such that the two associated ordered restrictions share the same set of variables (this is the 0-balancing operation defined below. The 1-balancing operation is similar with the roles of 0 and 1 reversed). 

\begin{figure}
    \centering
    \begin{subfigure}[h]{0.4\textwidth}
        \centering
        \begin{tikzpicture}[scale = 0.3, treenode/.style={circle, draw, thick}, leaf/.style={rectangle, draw, thick}]
        
        \node[treenode] at (2, 9) (x1){$x_1$};
        \node[leaf] at (0, 6) (leaf1) {0};
        \node[treenode] at (4, 6) (x2){$x_2$};
        \node[leaf] at (2, 3) (leaf2){0};
        \node[treenode] at (6, 3) (x3){$x_3$};
        \node[leaf] at (4, 0) (leaf3){0};
        \node[leaf] at (8, 0) (leaf4){1};
        
        \draw (x1) -- (leaf1);
        \draw (x1) -- (x2);
        \draw (x2) -- (leaf2);
        \draw (x2) -- (x3);
        \draw (x3) -- (leaf3);
        \draw (x3) -- (leaf4);   
        \end{tikzpicture}
        \caption{Optimal $\DT$ for $x_1\wedge x_2 \wedge x_3$}
    \end{subfigure}
    \hfill
    \begin{subfigure}[h]{0.4\textwidth}
        \centering
        \begin{tikzpicture}[scale = 0.3, treenode/.style={circle, draw, thick}, leaf/.style={rectangle, draw, thick}]
        
        \node[treenode] at (9, 9) (x1){$x_1$};

        \node[treenode] at (5, 6) (x21) {$x_2$};
        \node[treenode] at (13, 6) (x22){$x_2$};

        \node[treenode] at (3, 3) (x31){$x_3$};
        \node[treenode] at (7, 3) (x32){$x_3$};
        \node[treenode] at (11, 3) (x33){$x_3$};
        \node[treenode] at (15, 3) (x34){$x_3$};

        \node[leaf] at (2, 0) (leaf1){0};
        \node[leaf] at (4, 0) (leaf2){0};
        \node[leaf] at (6, 0) (leaf3){0};
        \node[leaf] at (8, 0) (leaf4){0};
        \node[leaf] at (10, 0) (leaf5){0};
        \node[leaf] at (12, 0) (leaf6){0};
        \node[leaf] at (14, 0) (leaf7){0};     
        \node[leaf] at (16, 0) (leaf8){1};
        
        \draw (x1) -- (x21);
        \draw (x1) -- (x22);

        \draw (x21) -- (x31);
        \draw (x21) -- (x32);
        \draw (x22) -- (x33);
        \draw (x22) -- (x34);
        
        \draw (x31) -- (leaf1);
        \draw (x31) -- (leaf2);
        \draw (x32) -- (leaf3);
        \draw (x32) -- (leaf4); 
        \draw (x33) -- (leaf5);
        \draw (x33) -- (leaf6); 
        \draw (x34) -- (leaf7);
        \draw (x34) -- (leaf8);      
        \end{tikzpicture}
        \caption{Completed balanced $\DT$ for $x_1 \wedge x_2 \wedge x_3$}
    \end{subfigure}
    \caption{Illustration of $\DT(x_1\wedge x_2\wedge x_3)$ used in the $\CDT$ construction in the proof of \hastad's Switching Lemma.}
    \label{fig: term CDT}
\end{figure}

\subsection{0-Balancing and 1-Balancing}
    
Given a decision tree $\Gamma$ for a Boolean function $F$, the 0-balanced version $\Gamma'$ is constructed as follows. We first pull-up the zeros, in other words, if there is any subtree all of whose leaves are labelled 0, we contract the entire subtree to a single leaf node labelled 0. The construction then proceeds in $d$ rounds where $d$ is the length of the longest root-to-leaf path in $\Gamma$ (note this is not necessarily the depth of $\Gamma$ due to the presence of degree-1 nodes). This process leaves the 1-leaves in $\Gamma$ unaltered. As we proceed, we also construct a map $\ass$ which associates each leaf (both 0 and 1 leaves) in $\Gamma'$ with a 1-leaf in $\Gamma'$. To begin with, this map $\ass$ associates each 1-leaf to itself (i.e, if $u$ is a 1-leaf, then $\ass(u)=u$).

In the $i^{th}$ round, we consider all 0-leaves in $\Gamma$ at distance $(d-i)$ from the root. Let $u$ be one such 0-leaf and $T_u$ the subtree rooted at the sibling of $u$. Observe that $T_u$ necessarily has some leaf labelled 1, else the entire subtree rooted at the parent of $u$ would have been contracted to a single leaf node labelled 0. We then mirror the entire subtree $T_u$ at the leaf node $u$ and relabel all the leaves of this mirrored subtree with 0. These are the 0-leaves of $\Gamma'$. For each such newly created 0-leaf $w$ (in the mirrored subtree $T_u$), let $w'$ be the corresponding leaf in the tree $T_u$. Set $\ass(w) \gets \ass(w')$. 

See \cref{fig:0 balancing} for an illustration of the 0-balancing process. Observe that if we 0-balance the best decision tree for a term, we obtain the complete balanced tree (see \cref{fig: term CDT}).


\begin{figure}
    \centering
    \begin{subfigure}[h]{0.15\textwidth}
        \begin{tikzpicture}[scale = 0.3, treenode/.style={circle, draw, thick}, leaf/.style={rectangle, draw, thick}]
        
        \node[treenode] at (10, 10) (x1){$x_1$};
        \node[leaf] at (7, 6) (collapsed0) {0};
        \node[treenode] at (13, 6) (x5){$x_5$};
        \node[treenode] at (11, 3) (x2){$x_2$};
        \node[leaf] at (10, 0) (leaf4){0};
        \node[leaf] at (12, 0) (leaf5){1};
        \node[leaf] at (15, 3) (leaf6){0};
        
        \draw (x1) -- (collapsed0);
        \draw (x1) -- (x5);
        \draw (x5) -- (x2);
        \draw (x2) -- (leaf4);
        \draw (x2) -- (leaf5);
        \draw (x5) -- (leaf6);
        
        \end{tikzpicture}
        \caption{initial}
        \end{subfigure}
        \hfill
        \begin{subfigure}[h]{0.15\textwidth}
        \begin{tikzpicture}[scale = 0.3, treenode/.style={circle, draw, thick}, leaf/.style={rectangle, draw, thick}]
    
        \node[treenode] at (10, 10) (x1){$x_1$};
        \node[leaf] at (7, 6) (x3){0};
        \node[treenode] at (13, 6) (x5){$x_5$};
        \node[treenode] at (11, 3) (x2){$x_2$};
        \node[leaf] at (10, 0) (leaf4){0};
        \node[leaf] at (12, 0) (leaf5){1};
        \node[leaf] at (15, 3) (leaf6){0};

        \node at (8, 0) {$w_1$};
        \node at (14, 0) {$w_2$};
    
        \draw (x1) -- (x3);
        \draw (x1) -- (x5);
        \draw (x5) -- (x2);
        \draw (x2) -- (leaf4);
        \draw (x2) -- (leaf5);
        \draw (x5) -- (leaf6);
    
        \draw[->, blue, thick] (leaf4) .. controls (11, -2) .. (leaf5); 
    
        \end{tikzpicture}
        \caption{first round \\ \textcolor{blue}{$assoc(w_1) \gets w_2$}}
        \end{subfigure}
        \hfill
        \begin{subfigure}[h]{0.15\textwidth}
        \begin{tikzpicture}[scale = 0.3, treenode/.style={circle, draw, thick}, leaf/.style={rectangle, draw, thick}]
    
        \node[treenode] at (10, 10) (x1){$x_1$};
        \node[leaf] at (7, 6) (x3){0};
        \node[treenode] at (13, 6) (x5){$x_5$};
        \node[treenode] at (11, 3) (x2){$x_2$};
        \node[leaf] at (10, 0) (leaf4){0};
        \node[leaf] at (12, 0) (leaf5){1};
    
        \node[treenode] at (15, 3) (newx2){$x_2$};
        \node[leaf] at (14, 0) (leaf6){0};
        \node[leaf] at (16, 0) (leaf7){0};
    
        \node at (14, -1.5) {$w_3$};
        
        \draw (x1) -- (x3);
        \draw (x1) -- (x5);
        \draw (x5) -- (x2);
        \draw (x5) -- (newx2);
        \draw (x2) -- (leaf4);
        \draw (x2) -- (leaf5);
    
        \draw (newx2) -- (leaf6);
        \draw (newx2) -- (leaf7);
    
        \draw[->, blue, thick] (leaf4) .. controls (11, -2) .. (leaf5); 
    
        \draw[->, red, thick] (leaf6) .. controls (12, -3) .. (leaf4);
        \draw[->, red, thick] (leaf7) .. controls (14, -3) .. (leaf5);
    
        \end{tikzpicture}
        \caption{second round \\ \textcolor{red}{$assoc(w_3) \gets$} \textcolor{blue}{$assoc(w_1) \gets w_2$}}
        \end{subfigure} 
        \hfill
        \begin{subfigure}[h]{0.3\textwidth}
        \begin{tikzpicture}[scale = 0.3, treenode/.style={circle, draw, thick}, leaf/.style={rectangle, draw, thick}]
            
        \node[treenode] at (10, 10) (x1){$x_1$};
        \node[treenode] at (6, 6) (newx5){$x_5$};
        \node[treenode] at (14, 6) (x5){$x_5$};
        \node[treenode] at (12, 3) (x2){$x_2$};
        \node[leaf] at (11, 0) (leaf4){0};
        \node[leaf] at (13, 0) (leaf5){1};
    
        \node[treenode] at (16, 3) (newx2){$x_2$};
        \node[leaf] at (15, 0) (leaf6){0};
        \node[leaf] at (17, 0) (leaf7){0};

        \node[treenode] at (4, 3) (newx2l){$x_2$};
        \node[treenode] at (8, 3) (newx2r){$x_2$};
        \node[leaf] at (3, 0) (leaf1){0};
        \node[leaf] at (5, 0) (leaf2){0};
        \node[leaf] at (7, 0) (leaf3){0};
        \node[leaf] at (9, 0) (leaf9){0};

        \node at (7, -1.5) {$w_7$};

        \draw (x1) -- (newx5);
        \draw (x1) -- (x5);
        \draw (x5) -- (x2);
        \draw (x5) -- (newx2);
        \draw (x2) -- (leaf4);
        \draw (x2) -- (leaf5);
    
        \draw (newx2) -- (leaf6);
        \draw (newx2) -- (leaf7);
    
        \draw[->, blue, thick] (leaf4) .. controls (12, -2) .. (leaf5); 
    
        \draw[->, red, thick] (leaf6) .. controls (13, -3) .. (leaf4);
        \draw[->, red, thick] (leaf7) .. controls (15, -3) .. (leaf5);
    
        \draw (newx5) -- (newx2l);
        \draw (newx5) -- (newx2r);
        \draw (newx2l) -- (leaf1);
        \draw (newx2l) -- (leaf2);
        \draw (newx2r) -- (leaf3);
        \draw (newx2r) -- (leaf9);
    
        \draw[->, olive, thick] (leaf1) .. controls (7, -4) .. (leaf4);
        \draw[->, olive, thick] (leaf2) .. controls (9, -4) .. (leaf5);
        \draw[->, olive, thick] (leaf3) .. controls (11, -4) .. (leaf6);
        \draw[->, olive, thick] (leaf9) .. controls (13, -4) .. (leaf7);
    
        \end{tikzpicture}
        \caption{third round \\ \textcolor{olive}{$assoc(w_7) \gets$} \textcolor{red}{$assoc(w_3) \gets$} \textcolor{blue}{$assoc(w_1) \gets w_2$}}     
        \end{subfigure}
        \caption{Illustration of 0-balancing process}
        \label{fig:0 balancing}
\end{figure}

At the end of this process, observe that $\Gamma$ is transformed into another decision tree $\Gamma'$ such that the following hold.
\begin{itemize}
    \item If $\Gamma$ computes a function $F$ under some restriction $\rho$, so does $\Gamma'$. 
    \item The 1-leaves in $\Gamma'$ are in 1-1 correspondence with the 1-leaves in $\Gamma$. Furthermore, the two 1-leaves (the one in $\Gamma$ and its associated 1-leaf in $\Gamma'$) correspond to identical ordered restrictions.
    \item Every 0-leaf $w$ in $\Gamma'$ has an associated 1-leaf in $\Gamma'$ given by $\ass(w)$. Furthermore, the corresponding ordered restrictions (namely $\alpha^{\Gamma'}_w$ and $\alpha^{\Gamma'}_{\ass(w)}$) share the same set of variables which are queried in the same order along both these root-to-leaf paths.
\end{itemize}

Let us now try to understand what are the 0-leaves constructed in the 0-balancing process. Let $w$ be any 0-leaf in $\Gamma'$ and $w'=\ass(w)$ be the corresponding 1-leaf. Furthermore, let $\alpha := \alpha^{\Gamma'}_{w} = ({v_1} \mapsto c_1, \dots , {v_t} \mapsto c_t)$ and $\beta := \alpha^{\Gamma'}_{w'} = ({v_1} \mapsto d_1, \dots , {v_t} \mapsto d_t)$. First, we must have that $\dom(\alpha)=\dom(\beta)$ and that the variables in this common domain must be queried in the same order. Furthermore, whenever $\alpha$ differs from $\beta$, the ordered restriction formed by following $\beta$ up to the step prior to this particular point of disagreement and then taking a step according to $\alpha$ must cause the formula $F|_\rho$ to evaluate to 0. This occurs as the mirroring operation is performed only at such nodes. More precisely, let $\alpha_i$ denote the ordered restriction  $({v_1} \mapsto d_1, {v_2} \mapsto d_2, \dots , {v_{i-1}} \mapsto d_{i-1}, {v_i} \mapsto c_i)$. Note, $\alpha_i$ is the ordered restriction of length $i$ which is identical to $\beta$ in the first $i-1$ variables and then is similar to $\alpha$ on the $i^{th}$ variable. The ordered restrictions $\alpha$ and $\beta$ satisfy the following:
\[ \forall i \in [t], c_i \neq d_i \implies F|_{\rho,\alpha_i} \equiv 0. \]
Furthermore, the converse also holds. That is, let $\beta$ corresponds to an ordered restriction of some 1-leaf in $\Gamma'$, then the ordered restriction $\alpha$ (with the same domain and same order of querying) corresponds to a 0-node in $\Gamma'$ only if the above condition holds.

Since this is an important point, we summarize the above discussion in the following definition and claim.

\begin{definition}\label{def:pivot}
Let $F$ be a Boolean function, $\rho$ a restriction and $\alpha = ({v_1} \mapsto c_1, \dots , {v_t} \mapsto c_t),\beta = ({v_1} \mapsto d_1, \dots , {v_t} \mapsto d_t)$ be two ordered restrictions (on the same domain and order of querying). 
        We say $\alpha \in \ZASS(F,\rho,\beta)$ iff 
        \[ \forall i \in [t], c_i \neq d_i \implies F|_{\rho,\alpha_i} \equiv 0 \]where $\alpha_i$ refers to the ordered restriction $({v_1} \mapsto d_1, {v_2} \mapsto d_2, \dots , {v_{i-1}} \mapsto d_{i-1}, {v_i} \mapsto c_i)$.
\end{definition}
    
\begin{claim}\label{claim:pivot claim}
Let $\Gamma$ compute the formula $F$ under the restriction $\rho$ and $\Gamma'$ be the 0-balanced version of $\Gamma$. Let $w$ be a 1-leaf in $\Gamma$ (and hence also $\Gamma'$) and $\beta$ be the corresponding ordered restriction. Then $\alpha$ is an ordered restriction corresponding to a 0-leaf $w'$ with $\ass(w') =w$ iff $\alpha \in \ZASS(F,\rho, \beta)$.
\end{claim}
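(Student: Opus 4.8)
The statement is the precise form of the informal discussion that precedes it, pinning down exactly which $0$-leaves of the $0$-balanced tree $\Gamma'$ are sent by $\ass$ to a prescribed $1$-leaf $w$. The plan is to prove it by induction on $t=|\dom(\beta)|$, the number of queries along the root-to-$w$ path in $\Gamma'$, peeling off the top node of $\Gamma$ at each step and invoking the recursive description of the $0$-balancing construction (the induction proves the biconditional, so both directions come out together). Two cosmetic points should be dealt with first: for $\alpha=\beta$ the displayed equivalence fails trivially ($\beta$ is the path of $w$ itself, yet $\beta\in\ZASS(F,\rho,\beta)$ vacuously), so the claim is to be read for $\alpha\neq\beta$, or with ``$0$-leaf'' replaced by ``leaf''; and since $\Gamma'$ computes $F$ under $\rho$, every root-to-leaf path of $\Gamma'$ is consistent with $\rho$, so we may assume $\alpha$ is consistent with $\rho$ (an inconsistent $\alpha$ labels no leaf and is not in $\ZASS$). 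I will also use the easy invariant, provable by the same induction and already among the stated bulleted properties, that in a $0$-balanced tree every $0$-leaf is sent by $\ass$ to a $1$-leaf of the same tree querying the same variables in the same order.

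For the induction, the base case $t=0$ is immediate: $\Gamma=\Gamma'$ is a single $1$-leaf, $\beta$ is empty, there are no $0$-leaves, and $\ZASS(F,\rho,\beta)=\set{\beta}$. For $t\geq 1$ let $v_1$ be the variable at the root of $\Gamma$ (the first variable of $\beta$). If the root has out-degree $1$, then $v_1\in\dom(\rho)$ and $d_1=\rho(v_1)$; the $0$-balancing of $\Gamma$ is that degree-$1$ root followed by the $0$-balancing of the child subtree, which computes $F$ under $\rho$, and $\ass$ is inherited. The coordinate-$1$ condition of $\ZASS$ is then vacuous and the rest translates verbatim, so the inductive hypothesis applied to the child subtree finishes this case. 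If the root has out-degree $2$, let $\Gamma_0,\Gamma_1$ be its two child subtrees ($\Gamma_b$ computing $F|_{v_1\mapsto b}$ under $\rho$) and say $w\in\Gamma_{d_1}$. Split according to whether ``pull up zeros'' collapses $\Gamma_{1-d_1}$ to a single $0$-leaf, which happens exactly when $F|_{\rho,\,v_1\mapsto 1-d_1}\equiv 0$. If it does not (so both sides carry a $1$-leaf), no mirroring crosses the top split, $\Gamma'$ is the degree-$2$ root with the $0$-balancings $\Gamma'_0,\Gamma'_1$ hung below; any $0$-leaf with $\ass=w$ lies under $v_1\mapsto d_1$ (by the invariant, as $0$-leaves under $v_1\mapsto 1-d_1$ have $\ass$ inside $\Gamma'_{1-d_1}$), and the coordinate-$1$ condition of $\ZASS$ also forces $c_1=d_1$ since $F|_{\rho,\,v_1\mapsto 1-d_1}\not\equiv 0$; the claim then reduces to $\Gamma_{d_1}$ and $\beta_{d_1}$ via the inductive hypothesis.

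The crux is the remaining sub-case: $\Gamma_{1-d_1}$ collapses, so $F|_{\rho,\,v_1\mapsto 1-d_1}\equiv 0$. Now the $0$-balancing first $0$-balances $\Gamma_{d_1}$ into $\Gamma'_{d_1}$ and then mirrors that whole tree onto the collapsed $0$-leaf, relabelling the copy entirely with $0$; hence $\Gamma'$ is the degree-$2$ root with $\Gamma'_{d_1}$ under $v_1\mapsto d_1$ and an all-$0$ mirror copy of $\Gamma'_{d_1}$ under $v_1\mapsto 1-d_1$, with $\ass$ of a mirror-copy leaf equal to $\ass$ of the corresponding leaf of $\Gamma'_{d_1}$. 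The point is that the coordinate-$1$ condition of $\ZASS(F,\rho,\beta)$ now holds automatically both for $c_1=d_1$ (vacuously) and for $c_1=1-d_1$ (because $F|_{\rho,\,v_1\mapsto 1-d_1}\equiv 0$), while the deeper conditions — and this is the subtle part — translate into exactly $\ZASS(F|_{v_1\mapsto d_1},\rho,\beta_{d_1})$ regardless of $c_1$, precisely because $\ZASS$ freezes coordinate $1$ of the hybrid restriction to $\beta$'s value $d_1$ rather than to $\alpha$'s $c_1$. Matching this against the inductive hypothesis for $\Gamma_{d_1}$ then shows that $c_1=d_1$ selects the $0$-leaves (and $w$) of the $v_1\mapsto d_1$ copy while $c_1=1-d_1$ selects the corresponding leaves of the mirror copy, all with the correct $\ass$.

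So the main obstacle is exactly this cross-split mirroring step: one must see that flipping coordinate $1$ to the ``wrong'' value is free in $\ZASS$ because that branch already kills $F|_\rho$, and that the deeper $\ZASS$ conditions are untouched by the flip because they are phrased relative to $\beta$'s prefix. (Equivalently, in the ``chain of mirrorings'' picture: every mirroring at a $0$-leaf $u$ flips one coordinate of some path relative to its associated $1$-leaf; the $1$-leaf lies inside the mirrored subtree, so the new path agrees with $\beta$ before the flipped coordinate, and the new $\ZASS$ condition then just reads ``$F|_\rho$ along the path to $u$ is $\equiv 0$'', which is the defining property of $u$.) The only other work is routine bookkeeping — verifying that ``pull up zeros'' together with the global round order acts componentwise on the two sides of a degree-$2$ root, so that the recursion is legitimate, and disposing of the degenerate cases $\alpha=\beta$ and $\alpha$ inconsistent with $\rho$.
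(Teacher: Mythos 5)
Your proof is correct and is essentially a careful formalization, by induction on the root-to-leaf depth, of exactly the argument the paper gives only informally in the discussion preceding the claim (the paper offers no formal proof): the crux you identify — that a mirroring at a collapsed $0$-child makes the coordinate-$1$ flip free in $\ZASS$ because $F|_{\rho,\,v_1\mapsto 1-d_1}\equiv 0$, while the deeper conditions are unaffected since the hybrids $\alpha_i$ are anchored to $\beta$'s prefix — is precisely the point the paper's discussion makes. Your side remarks (excluding $\alpha=\beta$, and restricting to $\alpha$ consistent with $\rho$) correctly flag minor imprecisions in the claim as stated and do not affect how the claim is used later.
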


1-balancing is defined similarly with the roles of 0 and 1 reversed

\subsection{CDT Definition}
We are now ready to define the canonical decision tree (CDT). As indicated before, this definition is identical to \cite[Definition 19]{Rossman2019}.

\begin{definition}
    \label{def:CDT AC0}
        Given a formula $F$ on variable set $V$  and associated restriction tree $\rhotilde\colon T_F \to \bra{0,1,*}^V$, we define the canonical decision tree, denoted by $\CDT(F,\rhotilde)$, inductively (on depth and the number of variables) as follows:
    \begin{enumerate}
        \item  If $F$ is a constant 0 or 1, then $\CDT(F,\rhotilde)$ is the unique tree with a single leaf node labelled by the appropriate constant.
        \item If $F$ is a literal $x$ or $\neg x$, then 
        \begin{itemize}
            \item if $x$ is set by $\rhotilde(F)$ to a constant, then
            $\CDT(F,\rhotilde)$ is the unique tree with a single node labelled by the appropriate constant.
            \item Otherwise if $x$ is unset by $\rhotilde(F)$, then $\CDT(F,\rhotilde)$ is the tree with 3 nodes where the root is labelled by $x$ and the two children are labelled appropriately by 0 or 1.
        \end{itemize}
        \item If $F = F_1 \vee \dots \vee F_m$, then 
        \begin{itemize}
            \item If $F_1|_{\rhotilde(F)} \equiv F_2|_{\rhotilde(F)} \equiv \dots \equiv F_{m}|_{\rhotilde(F)} \equiv 0$, then $\CDT(F,\rhotilde)$ is the unique tree with a single leaf node labelled 0. 
            \item 
            Else, there is some $1 \leq \ell \leq m$ such that $F_1|_{\rhotilde(F)} \equiv \dots \equiv F_{\ell-1}|_{\rhotilde(F)} \equiv 0$ and $F_{\ell}|_{\rhotilde(F)} \not\equiv$ 0.
            \item If $F_{\ell}|_{\rhotilde(F)} \equiv 1$, then $\CDT(F,\rhotilde)$ is the unique tree node with a single leaf node labelled 1. 
            \item If $F_{\ell}|_{\rhotilde(F)} \not\equiv$ constant, then do the following steps to construct $\CDT(F,\rhotilde)$
            \begin{enumerate}
                \item \label{step3a} Let $\Gamma$ be $\CDT(F_\ell,\rhotilde|_{{F_\ell}})$ constructed inductively (since $\depth(F_\ell)< \depth(F)$).
                \item \label{step3b} Apply the restriction $\rhotilde(F)$ to $\Gamma$ to get $\Gamma'$ and remove all the sub-trees which are inconsistent with $\rhotilde(F)$\footnote{This step introduces degree 1 nodes in the decision tree.}.
                \item \label{step3c} 0-balance $\Gamma'$ to get $\Gamma"$.
                \item For each 0-leaf $u$ of $\Gamma"$, replace $u$ by $\CDT(F|_{\alpha_u}, \rhotilde)$ where $\alpha_u$ is the ordered restriction corresponding to $u$ in $\Gamma"$.    
            \end{enumerate}
        \end{itemize}
        The case when $F = F_1 \wedge \dots \wedge F_m $ is a conjunction of sub-formulas is handled similarly (with the roles of 0 and 1 reversed). \qedhere
    \end{enumerate}
\end{definition}

\begin{figure}
    \centering
    \hspace*{-1cm}
    \makebox[1.2\textwidth][c]{
    \begin{subfigure}[h]{0.15\textwidth}
        \centering
        \begin{tikzpicture}[scale = 0.3, treenode/.style={circle, draw, thick}, leaf/.style={rectangle, draw, thick}]
            
        \node[treenode] at (10, 10) (x1){$x_1$};
        \node[treenode] at (6, 6) (x3){$x_3$};
        \node[treenode] at (14, 6) (x5){$x_5$};

        \node[leaf] at (4, 3) (leaf1){0};
        \node[treenode] at (8, 3) (x4){$x_4$};

        \node[treenode] at (12, 3) (x2){$x_2$};
        \node[leaf] at (16, 3) (leaf6){0};

        \node[leaf] at (7, 0) (leaf2){0};
        \node[leaf] at (9, 0) (leaf3){1};

        \node[leaf] at (11, 0) (leaf4){0};
        \node[leaf] at (13, 0) (leaf5){1};
    
        \draw (x1) -- (newx5);
        \draw (x3) -- (leaf1);
        \draw (x3) -- (x4);
        \draw (x4) -- (leaf2);
        \draw (x4) -- (leaf3);
        \draw (x1) -- (x5);
        \draw (x5) -- (x2);
        \draw (x2) -- (leaf4);
        \draw (x2) -- (leaf5);
        \draw (x5) -- (leaf6);

        \draw[->, very thick] (17, 5) -- (19, 5);
    
        \end{tikzpicture}
        \caption{original $\DT$}     
    \end{subfigure}%
    \hfill
    \begin{subfigure}[h]{0.15\textwidth}
        \centering
        \begin{tikzpicture}[scale = 0.3, treenode/.style={circle, draw, thick}, leaf/.style={rectangle, draw, thick}]
            
        \node[treenode] at (10, 10) (x1){$x_1$};
        \node[treenode] at (6, 6) (x3){$x_3$};
        \node[treenode] at (14, 6) (x5){$x_5$};

        \node[leaf] at (4, 3) (leaf1){0};
        \node[treenode] at (8, 3) (x4){$x_4$};

        \node[treenode] at (12, 3) (x2){$x_2$};
        \node[leaf] at (16, 3) (leaf6){0};

        \node[leaf] at (7, 0) (leaf2){0};

        \node[leaf] at (11, 0) (leaf4){0};
        \node[leaf] at (13, 0) (leaf5){1};
    
        \draw (x1) -- (newx5);
        \draw (x3) -- (leaf1);
        \draw (x3) -- (x4);
        \draw (x4) -- (leaf2);
        \draw (x1) -- (x5);
        \draw (x5) -- (x2);
        \draw (x2) -- (leaf4);
        \draw (x2) -- (leaf5);
        \draw (x5) -- (leaf6);

        \draw[->, very thick] (17, 5) -- (19, 5);
    
        \end{tikzpicture}
        \caption{$x_4 \gets 0$}     
    \end{subfigure}%
    \hfill
    \fbox{%
    \begin{subfigure}[h]{0.6\textwidth}
        \centering
        \begin{subfigure}[h]{0.4\textwidth}
            \centering
            \begin{tikzpicture}[scale = 0.3, treenode/.style={circle, draw, thick}, leaf/.style={rectangle, draw, thick}]
            
            \node[treenode] at (10, 10) (x1){$x_1$};
            \node[leaf] at (7, 6) (collapsed0) {0};
            \node[treenode] at (13, 6) (x5){$x_5$};
            \node[treenode] at (11, 3) (x2){$x_2$};
            \node[leaf] at (10, 0) (leaf4){0};
            \node[leaf] at (12, 0) (leaf5){1};
            \node[leaf] at (15, 3) (leaf6){0};
            
            \draw (x1) -- (collapsed0);
            \draw (x1) -- (x5);
            \draw (x5) -- (x2);
            \draw (x2) -- (leaf4);
            \draw (x2) -- (leaf5);
            \draw (x5) -- (leaf6);
            
            \draw[->, very thick] (18, 5) -- (20, 5);
            \end{tikzpicture}
        \end{subfigure}%
        \hfill %
        \begin{subfigure}[h]{0.5\textwidth}
            \centering
            \begin{tikzpicture}[scale = 0.3, treenode/.style={circle, draw, thick}, leaf/.style={rectangle, draw, thick}]
                
            \node[treenode] at (10, 10) (x1){$x_1$};
            \node[treenode] at (6, 6) (newx5){$x_5$};
            \node[treenode] at (14, 6) (x5){$x_5$};
            \node[treenode] at (12, 3) (x2){$x_2$};
            \node[leaf] at (11, 0) (leaf4){0};
            \node[leaf] at (13, 0) (leaf5){1};
        
            \node[treenode] at (16, 3) (newx2){$x_2$};
            \node[leaf] at (15, 0) (leaf6){0};
            \node[leaf] at (17, 0) (leaf7){0};
    
            \node[treenode] at (4, 3) (newx2l){$x_2$};
            \node[treenode] at (8, 3) (newx2r){$x_2$};
            \node[leaf] at (3, 0) (leaf1){0};
            \node[leaf] at (5, 0) (leaf2){0};
            \node[leaf] at (7, 0) (leaf3){0};
            \node[leaf] at (9, 0) (leaf9){0};
        
            \draw (x1) -- (newx5);
            \draw (x1) -- (x5);
            \draw (x5) -- (x2);
            \draw (x5) -- (newx2);
            \draw (x2) -- (leaf4);
            \draw (x2) -- (leaf5);
        
            \draw (newx2) -- (leaf6);
            \draw (newx2) -- (leaf7);
        
            \draw[->, blue, thick] (leaf4) .. controls (12, -2) .. (leaf5); 
        
            \draw[->, red, thick] (leaf6) .. controls (13, -3) .. (leaf4);
            \draw[->, red, thick] (leaf7) .. controls (15, -3) .. (leaf5);
        
            \draw (newx5) -- (newx2l);
            \draw (newx5) -- (newx2r);
            \draw (newx2l) -- (leaf1);
            \draw (newx2l) -- (leaf2);
            \draw (newx2r) -- (leaf3);
            \draw (newx2r) -- (leaf9);
        
            \draw[->, olive, thick] (leaf1) .. controls (7, -4) .. (leaf4);
            \draw[->, olive, thick] (leaf2) .. controls (9, -4) .. (leaf5);
            \draw[->, olive, thick] (leaf3) .. controls (11, -4) .. (leaf6);
            \draw[->, olive, thick] (leaf9) .. controls (13, -4) .. (leaf7);
        
            \end{tikzpicture}     
        \end{subfigure}%
        \caption{0-balancing}
    \end{subfigure}}
    }
    \caption{Illustration of \cref{step3a,step3b,step3c} in $\CDT$ construction}
    \label{fig: entire process}
\end{figure}

Given any $s$-long bit-string $a = (a_1,a_2,\dots,a_s)$, we can walk along the CDT using $a$ as an "instruction set". In other words, we walk from the root to a node of the tree by using $a$ to make choices at the degree-2 nodes and otherwise following the degree one-edges. If this walks ends at a node $w$ (possibly leaf node) of the CDT, we denote the corresponding ordered restriction $\alpha_w$ by $\CDT^{(a)}(F,\rhotilde)$, else $\CDT^{(a)}(F,\rhotilde)$ is undefined. When this node is a leaf node, then it is labelled either 0 or 1. In this case, we further enhance this definition as follows.

\begin{definition}
    Let $F$ be a formula and $\rhotilde$ an associated restriction tree. For any bit-string $a = (a_1,\dots,a_s)$ and $z \in \zo$, define
    \[ 
        \CDT^{(a)}_z(F,\rhotilde) = \begin{cases} 
            \alpha_w & \text{ if the walk according to instruction set "$a$" ends on a leaf $w$ labelled $b$, }\\
            \bot & otherwise. \qedhere
        \end{cases}
    \]
\end{definition}

\subsection{Unpacking the CDT}

Fix a formula $F$ on $n$ variables and an associated restriction tree $\rhotilde\colon T_F \to \{0,1,*\}^n$. Let $a = (a_1,\dots,a_s)$ be an $s$-bit-string with $s \geq 1$. Let us assume $F = F_1 \vee F_2 \vee \dots \vee F_m$ is a disjunction. In this section, we try to understand when $\CDT^{(a)}_0(F,\rhotilde)$ exists.

\medskip

Suppose $\CDT^{(a)}_0(F,\rhotilde)$ exists and is the ordered restriction $\alpha$. Then, the following must be true.
\begin{itemize}
    \item There exists a unique $\ell \in [m]$ such that for all $\ell' < \ell$, we have $F_{\ell'}|_{\rhotilde(F)} \equiv 0$ and $F_{\ell}|_{\rhotilde(F)}\not\equiv \text{constant}$.
    \item Let $\Gamma$ be $\CDT(F_\ell, \rhotilde|_{F_\ell})$. Let $\Gamma'$ be the tree obtained by restricting $\Gamma$ by $\rhotilde(F)$ and $\Gamma''$ be the 0-balancing of $\Gamma'$. There must be some $1 \leq r \leq s$ such that, a walk to a leaf of $\Gamma''$ using instruction set $a_{\leq r}$ leads us to a leaf in $\Gamma''$. Let $\alpha'$ be the corresponding ordered restriction (which ought to be a prefix of $\alpha$). 
    \item $\CDT^{(a_{>r})}_0(F|_{\alpha'}, \rhotilde)$ exists, and is $\alpha''$ say. In that case, $\alpha = (\alpha', \alpha'')$. 
\end{itemize}

Let us peer deeper into the balancing operation. Since $\Gamma''$ is the $0$-balancing of $\Gamma'$, we must have that $\ass(\alpha') = \beta$ for some $1$-leaf $\beta$ of $\Gamma''$ and this $\beta$ is a $1$-leaf of $\Gamma = \CDT(F_\ell, \rhotilde|_{F_{\ell}})$ as well. In fact, $\beta$ must be consistent with $\rhotilde(F)$ as this path in $\CDT(F_\ell, \rhotilde|_{F_\ell}) = \Gamma$ survived in $\Gamma'$ as well. Thus, there is some instruction set $b \in \set{0,1}^t$, for some $t \geq r$, such that $\CDT^{(b)}_1(F_\ell, \rhotilde|_{F_\ell}) = \beta$. 

Let us focus on the differences between the ordered restriction $\alpha'$ and $\beta$. We know there were $t$ degree-2 nodes on the path to $\beta$ in $\CDT(F_\ell, \rhotildel)$, and there were $r$ degree-2 nodes on the path to $\alpha'$ in $\Gamma''$. Thus, among the $t$ degree-2 nodes on the path to $\beta$, we must have that $t-r$ of them belong to $\dom(\rhotilde(F))$ (with $\beta$ being consistent with $\rhotilde(F)$) and the path to $\alpha'$ uses $a_{\leq r}$ as instructions for the other $r$ nodes (instead of whatever route was taken by the path to $\beta$). 

\medskip

We summarize this discussion in the following lemma. We will be using this lemma for a \emph{random} restriction tree $\brhotilde$ (chosen according to a suitable distribution). To distinguish the quantities that depend on this random variable from the rest, we use bold font to indicate all the quantities (including $\brhotilde$ itself) that are functions of $\brhotilde$. 

\begin{lemma}[Unpacking $\CDT^{(a)}_0(F,\brhotilde)$]
\label{lem:unpacking}
    Let $F= F_1 \vee \dots \vee F_m$ be a formula and $\brhotilde\colon T_F \to \bra{0,1,*}^n$ an associated restriction tree. Let $s \geq 1$ and $a \in \zo^s$. 
    
    Then $\CDT^{(a)}_0(F,\brhotilde)$ exists if and only if there exist 
    \begin{itemize}
        \item $\ell \in [m]$
        \item non-negative integer $r \in [s]$,
        \item non-negative integer $t \geq r$
        \item a bit-string $b \in \zo^t$ and 
        \item $Q \in \binom{[t]}{r}$
    \end{itemize} 
    such that the following three conditions $\calA, \calB, \calC$ are met. 
    \begin{description}
        \item [$\calA(\ell, t, b)$:]
        \renewcommand{\theenumi}{(\roman{enumi})}
        \begin{enumerate}
            \item \label{item:Ai} $F_{\ell'}|_{\brhotilde(F)} \equiv 0$ for all $\ell' < \ell$,
            \item \label{item:Aii}$\CDT_1^{(b)}(F_\ell, \brhotilde|_{{F_\ell}})$ exists (and is $\bbeta$ say).
            \item \label{item:Aiii} $\bbeta$ is consistent with $\brhotilde(F)$,
        \end{enumerate}
        \item[$\calB(\ell,t,b,r,Q,a_{\leq r})$:] 
        \begin{enumerate}
            \item \label{item:Bi}$Q$ identifies $\stars(\brhotilde(F))$ within $\dom(\bbeta) \cap \stars(\brhotilde(F_\ell))$.
            \item  \label{item:Bii}Let $\balphap$ is the ordered restriction obtained by modifying $\bbeta$ by replacing the assignment of the $r$ variables in $\dom(\bbeta) \cap \stars(\brhotilde(F_\ell))$ identified by $Q$ by $a_{\leq r}$. We denote this process as "$\balphap \xleftarrow[Q\gets a_{\leq r}]{\stars{\brhotilde(F_\ell)}} \bbeta$". Then $\balphap \in \ZASS(F_\ell,\brhotilde(F),\bbeta)$
        \end{enumerate}
        \item[$\calC(\ell,t,b,r,Q,a)$:] $\CDT^{(a_{>r})}_0(F|_{\balphap},\brhotilde)$ exists (is $\balphapp$ say).
    \end{description}
    Furthermore, when $\CDT^{(a)}_0(F,\brhotilde)$ exists, we have $\CDT^{(a)}_0(F,\brhotilde) = (\balphap,\balphapp)$.
\end{lemma}

When clear from context, we will drop the arguments $\ell, t, b, r, Q, a$ from the properties $\calA, \calB$ and $\calC$. 

\section{Downward closure property}\label{sec:downward}

Let $\rho , \rho'$ be two restrictions on the variable set $V$ and let $T \subseteq V$ any subset of the variables. Recall that we say that $\rho' \cleq_T \rho$ if (1) $\stars(\rho') \cap T \subseteq \stars(\rho) \cap T$ and (2) $\rho_1$ and $\rho_2$ are consistent. We say that  a set $\calF$ of restrictions is downward-closed with respect to the set of variables $T$ if the following holds for any pair of restrictions $\rho, \rho'$
\[\rho \in \calF \text{ and } \rho' \cleq_T \rho \implies \rho' \in \calF.\]

\noindent
We now extend this definition of downward-closed sets to restriction trees.

\begin{definition}[downward-closed set of restriction trees]
\label{def:downward closet set of trees of restrictions}
Let $F$ be a formula on the variable set $V$ and $\rhotilde,\rhotilde'\colon T_F \to \bra{0,1,*}^V$ be two associated restriction trees. Let $S \subseteq V$. We say $\rhotilde' \cleq_{S} \rhotilde$ iff for all $G \in T_F$, we have $\rhotilde'(G) \cleq_S \rhotilde(G)$.

We call a set $\calT$ of restriction trees \emph{downward-closed with respect to the variable set $T$} if the following holds for any pair of restriction trees
\[\rhotilde \in \calT \text{ and } \rhotilde' \cleq_{T} \rhotilde \implies \rhotilde' \in \calT.\]

If $T = V$ (the full set of variables), then we drop the subscript $T$ in the above definitions.
\end{definition}
It is evident that if $\calT$ and $\calT'$ are two downward-closed set of restriction trees with respect to a variable set, so is their intersection. The key property that enables our proof of the main lemma is the following downward-closure property.

\begin{lemma}\label{lem:downward closure lemma}
Let $F = F_1 \vee F_2 \vee \dots \vee F_m$ be a formula on variable set $V$ and $\rhotilde\colon T_F \to \bra{0,1,*}^{|V|}$ be an associated restriction tree. Let $s \in \Z_{> 0}$, $a \in \zo^s$ and $\alpha$ be an ordered restriction such that 
    \[
        \CDT^{(a)}_0(F,\rhotilde) = \alpha.
    \] 
Suppose $\rhotilde'\colon T_F \to \bra{0,1,*}^{|V|}$ is another restriction tree satisfying
\begin{itemize}\itemsep 0pt
\item $\rhotilde' \cleq \rhotilde$ and 
\item $\rhotilde'(G)|_{\dom(\alpha)} = \rhotilde(G)|_{\dom(\alpha)}$ for all $ G \in T_F$,
\end{itemize} 
then $\CDT^{(a)}_0(F,\rhotilde') = \alpha$.

Similarly, when $F = F_1 \wedge F_2 \wedge \dots \wedge F_m$, the same holds for ``$\CDT^{(a)}_1(F,\rhotilde) = \alpha$''.
\end{lemma}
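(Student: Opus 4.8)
The plan is to prove this by induction on the structure of $F$, exactly mirroring the inductive definition of $\CDT(F,\rhotilde)$. The key conceptual point is that the behavior of the $\CDT$ construction along the walk dictated by $a$ (ending at the $0$-leaf with restriction $\alpha$) depends only on (i) the values $\rhotilde(G)$ assigns to variables in $\dom(\alpha)$, and (ii) which subformulae are forced to a constant by $\rhotilde(G)$. Passing from $\rhotilde$ to $\rhotilde'$ with $\rhotilde' \cleq \rhotilde$ only \emph{sets more} variables, so any subformula forced to $0$ (or to a constant) by $\rhotilde(G)$ is still forced the same way by $\rhotilde'(G)$ — crucially, since consistency holds, a subformula forced to $0$ cannot become forced to $1$. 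Meanwhile, the hypothesis $\rhotilde'(G)|_{\dom(\alpha)} = \rhotilde(G)|_{\dom(\alpha)}$ guarantees that the variables actually \emph{read} along the relevant root-to-leaf paths behave identically. So nothing the walk sees can change.

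Concretely I would unwind the cases of \cref{def:CDT AC0}. If $F$ is a constant or a literal, the statement is immediate (a literal in $\dom(\alpha)$ is pinned down by the second hypothesis; a literal not in $\dom(\alpha)$ and unset by $\rhotilde(F)$ need not be considered since $\alpha$ is a $0$-leaf restriction through it would have to include that variable). For $F = F_1 \vee \dots \vee F_m$: since $\CDT^{(a)}_0(F,\rhotilde) = \alpha$ exists, there is an index $\ell$ with $F_{\ell'}|_{\rhotilde(F)} \equiv 0$ for $\ell' < \ell$ and $F_\ell|_{\rhotilde(F)} \not\equiv \text{constant}$; using \cref{lem:unpacking}, $\alpha$ decomposes as $(\alpha', \alpha'')$ where $\alpha'$ is a $0$-leaf of $\Gamma''$ (the $0$-balanced, $\rhotilde(F)$-restricted version of $\Gamma = \CDT(F_\ell, \rhotilde|_{F_\ell})$) and $\alpha'' = \CDT^{(a_{>r})}_0(F|_{\alpha'}, \rhotilde)$. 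I would argue: (1) each $F_{\ell'}|_{\rhotilde'(F)} \equiv 0$ still, because $\rhotilde'(F) \cgeq \rhotilde(F)$ sets more variables consistently, so $F_\ell$ is again the relevant first index (it cannot be forced to $1$ by a smaller restriction, and if $F_\ell|_{\rhotilde'(F)}$ became a constant $0$ then $\CDT^{(a)}_0$ would still be defined and equal to $\alpha$ only if $\alpha$ were empty — but one checks this edge case separately, or rules it out since $\dom(\alpha)$ then contains the variables of the relevant path; the honest statement is that $F_\ell|_{\rhotilde'(F)}\not\equiv$ constant because the $\alpha'$-path reads only $\dom(\alpha)$-variables on which $\rhotilde$ and $\rhotilde'$ agree). (2) By the inductive hypothesis applied to $F_\ell$ with the sub-restriction-tree $\rhotilde|_{F_\ell}$ and a suitable bitstring $b$ realizing the $1$-leaf $\beta = \ass(\alpha')$, the tree $\Gamma$ is "unchanged where it matters": the $1$-leaf $\beta$ still exists under $\rhotilde'|_{F_\ell}$, is still consistent with $\rhotilde'(F)$, and the $\ZASS$ condition for $\alpha'$ persists because it only involves evaluating $F_\ell|_{\rhotilde(F),\alpha'_i} \equiv 0$ at restrictions built from $\rhotilde(F)$ and $\dom(\alpha')$-variables. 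By \cref{claim:pivot claim} this means $\alpha'$ is still a $0$-leaf with $\ass(\alpha') = \beta$ in the $0$-balancing of the $\rhotilde'(F)$-restricted $\Gamma$. (3) Finally apply the inductive hypothesis (this time the "fewer variables" direction of the induction) to the subformula $F|_{\alpha'}$ with restriction trees $\rhotilde$ and $\rhotilde'$, noting $\dom(\alpha'') \subseteq \dom(\alpha)$ so the agreement hypothesis is inherited, to conclude $\CDT^{(a_{>r})}_0(F|_{\alpha'}, \rhotilde') = \alpha''$; hence $\CDT^{(a)}_0(F,\rhotilde') = (\alpha', \alpha'') = \alpha$. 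The $\wedge$ case is dual.

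The main obstacle I anticipate is step (2): carefully verifying that the $0$-balancing operation commutes with the change of restriction tree in the sense needed — i.e., that the mirroring structure producing $\alpha'$ from $\beta$ is reproduced under $\rhotilde'$. This is where one must be most careful that "$\rhotilde'$ sets more variables but agrees on $\dom(\alpha)$" is exactly the right hypothesis: the $\ZASS$-membership $\alpha' \in \ZASS(F_\ell, \rhotilde(F), \beta)$ is a statement about $F_\ell$ being forced to $0$ under various hybrid restrictions, and one must check these forcings are preserved (they are, since more-setting-and-consistent preserves "$\equiv 0$"), and also that no \emph{new} mirroring is forced that would extend or alter $\alpha'$ — which is prevented because the shape of $\Gamma$'s underlying tree along the $\beta$-path is fixed by $\rhotilde|_{F_\ell}$ restricted to $\dom(\alpha)$-variables via the inductive hypothesis. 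Once this is pinned down, assembling the three conditions of \cref{lem:unpacking} is routine bookkeeping.
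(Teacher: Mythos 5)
Your proposal is correct and follows essentially the same route as the paper: decompose the event via the Unpacking Lemma into the conditions on $\ell, r, t, b, Q, \alpha', \alpha'', \beta$, observe that the purely restriction-theoretic conditions are preserved because $\rhotilde'$ only sets more variables outside $\dom(\alpha)\supseteq\dom(\alpha')=\dom(\beta)$, invoke the inductive hypothesis for $\CDT^{(b)}_1(F_\ell,\cdot)$ (smaller depth) and $\CDT^{(a_{>r})}_0(F|_{\alpha'},\cdot)$ (fewer variables), and check that the $\ZASS$ membership and hence the $0$-balanced walk survive. The one point you flag as delicate — that the $0$-balancing structure along the $a$-walk is reproduced under $\rhotilde'$ — is handled in the paper exactly as you anticipate, via preservation of ``$\equiv 0$'' under consistent refinement together with agreement on $\dom(\alpha)$.
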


\noindent
Note that the lemma implies that the set $\calT_{F,a,\alpha}:=\{\rhotilde \colon \CDT^{(a)}_0(F,\rhotilde) = \alpha\}$ is downward-closed with respect to the variable set $V\setminus\dom(\alpha)$.

\begin{proof}
We are given that $\rhotilde'$ and $\rhotilde'$ behave identically on $\dom(\alpha)$, and $\rhotilde'$ only sets more variables (all of them outside of $\dom(\alpha))$ than $\rhotilde$. The proof is by induction on the depth and number of variables in the formula. 

\paragraph{{Base case:}}
The base case is when $F|_{\rhotilde(F)}$ is a literal or a constant. The lemma is clearly true in this case as $\rhotilde'$ only sets more variables than $\rhotilde$ and does not change the variables in $\dom(\alpha)$.
\paragraph{{Induction step:}} 
Let $F$ be a formula of depth $d$ on the variable set $[n]$. Assume the lemma is true for all formulae of either depth less than $d$ or involving less than $n$ variables.

By the Unpacking lemma~(\cref{lem:unpacking}), we have that $\CDT^{(a)}_0(F,\rhotilde) = \alpha$ if and only if there exist $\ell, r, t, b, Q$ and ordered restrictions $\alpha', \alpha'', \beta$ such that the following are true.
\renewcommand{\theenumi}{(\roman{enumi})}%
\begin{enumerate}
    \item \label{i}$F_{\ell'}|_{\rhotilde(F)} \equiv 0$ for all $\ell' < \ell$,
    \item \label{ii}$\CDT_1^{(b)}(F_\ell, \rhotilde|_{{F_\ell}})=\beta$,
    \item \label{iii}$\beta$ is consistent with $\rhotilde(F)$,
    \item \label{iv}$Q$ identifies $\stars(\rhotilde(F))$ within $\dom(\beta) \cap \stars(\rhotilde(F_\ell))$.
    \item \label{v}$\alpha' \in \ZASS(F_\ell,\rhotilde(F),\beta)$ where $\alpha' \xleftarrow[Q\gets a_{\leq r}]{\stars{\rhotilde(F_\ell)}} \beta$ (i.e., $\alpha'$ is the ordered restriction obtained by modifying $\beta$ by replacing the assignment of the $r$ variables in $\dom(\bbeta) \cap \stars(\brhotilde(F_\ell))$ identified by $Q$ by $a_{\leq r}$). Note that $\alpha' \in \ZASS(F_\ell,\rhotilde(F),\beta)$ ensures that $\alpha'$ is a 0-path in the decision tree $\Gamma''$ where $\Gamma''$ is defined as follows:
    \[
        \Gamma = \CDT(F_\ell,\rhotilde_{F_\ell}) \quad\stackrel{\text{Apply $\rhotilde(F)$}}{\leadsto}\quad \Gamma' \quad\stackrel{\text{0-balance}}{\leadsto}\quad \Gamma''.
    \]
    
    \item \label{vi}$\CDT^{(a_{>r})}_0(F|_{\alpha'},\rhotilde) = \alpha''$.
    \item \label{vii}$\alpha = (\alpha',\alpha'')$.
\end{enumerate}

We will demonstrate that for the same $\ell, r, t, b, Q$ and ordered restrictions $\alpha',\alpha'', \beta$ all the above conditions continue to hold good when $\rhotilde$ is replaced by $\rhotilde'$. This will prove that $\CDT^{(a)}_0(F,\rhotilde')= \alpha$. 

\cref{vii} is trivially true as this is independent of $\rhotilde$ or $\rhotilde'$. The other conditions are met for the following reasons. We first observe that since $\alpha' \in \ZASS(F_\ell,\rhotilde(F),\beta)$, we have $\dom(\beta) = \dom(\alpha') \subseteq \dom(\alpha)$.
\begin{itemize}
    \item \cref{i,iii} continue to hold good when when more variables are set from $\rhotilde$ to $\rhotilde'$.
    \item \cref{ii,vi} are true when $\rhotilde$ is replaced by $\rhotilde'$ due to the inductive assumption (since $F_\ell$ is a formula of smaller depth, $F|_{\alpha'}$ is a formula on fewer variables and $\rhotilde'$ does not alter the variables in $\dom(\beta)=\dom(\alpha')$ or $\dom(\alpha'')$).
    \item Since the variables in $\dom(\beta)=\dom(\alpha')\subseteq \dom(\alpha)$ are unaltered by $\rhotilde'$, we have 
    \begin{align*}
        \stars(\rhotilde(F))\cap \dom(\beta)&=\stars(\rhotilde'(F))\cap \dom(\beta),\\ \stars(\rhotilde(F_\ell))\cap \dom(\beta)&=\stars(\rhotilde'(F_\ell))\cap \dom(\beta).
    \end{align*}
    Hence, if $Q$ identifies $\stars(\rhotilde(F))$ within $\dom(\beta) \cap \stars(\rhotilde(F_\ell))$, it also identifies $\stars(\rhotilde'(F))$ within $\dom(\beta) \cap \stars(\rhotilde'(F_\ell))$. Thus, \cref{iv} holds.
    
    \item As for \cref{v}, since $\dom(\beta)\cap \stars(\rhotilde(F_\ell))=\dom(\beta)\cap \stars(\rhotilde'(F_\ell))$ and $\alpha' \xleftarrow[Q\gets a_{\leq r}]{\stars{\rhotilde(F_\ell)}} \beta$, we also have  $\alpha' \xleftarrow[Q\gets a_{\leq r}]{\stars{\rhotilde'(F_\ell)}} \beta$. It is now easy to verify from the definition of $\ZASS$ (\cref{def:pivot}), if $\alpha' \in \ZASS(F_\ell,\rhotilde(F),\beta)$, then we also have $\alpha' \in \ZASS(F_\ell,\rhotilde'(F),\beta)$ since we are only setting more variables. Thus \cref{v} also holds.
        
\end{itemize}
Thus, we have proved the claim.
\end{proof}

All of the above works even when dealing with the representation of restrictions given by pairs $(\sigma,S)$ (see \cref{sec:Sampleable}). In this case, the notion of downward closure is the standard definition of downward closure of sets. \cref{lem:downward closure lemma} merely re-stated in this language is the following

\begin{lemma}\label{lem:downward-redundant}
    Let $F = F_1 \vee F_2 \vee \dots \vee F_m$ be a formula on variable set $V$ and  $(\sigma, \Stilde)$ be a representation of associated restriction tree $\rhotilde\colon T_F \to \bra{0,1,*}^{|V|}$ (i.e, $\rho = \rho_{(\sigma,\Stilde)}$). Let $s \in \Z_{> 0}$, $a \in \zo^s$ and $\alpha$ be an ordered restriction such that 
        \[
            \CDT^{(a)}_0(F,\rhotilde) = \alpha.
        \] 
    Suppose $(\sigma,\Stilde')$ is a representation of another restriction tree $\rhotilde'\colon T_F \to \bra{0,1,*}^{|V|}$ satisfying
    \begin{itemize}\itemsep 0pt
    \item $\Stilde'(G) \subseteq \Stilde(G)$ for all $G \in T_F$ and 
    \item $\Stilde'(G) \cap {\dom(\alpha)} = \Stilde(G)\cap {\dom(\alpha)}$ for all $ G \in T_F$,
    \end{itemize} 
    then $\CDT^{(a)}_0(F,\rhotilde') = \alpha$.
    
    Similarly, when $F = F_1 \wedge F_2 \wedge \dots \wedge F_m$, the same holds for ``$\CDT^{(a)}_1(F,\rhotilde) = \alpha$''.
\end{lemma}
We will complete this discussion by extending the definition of downward-close to this representation of restrictions. 
\begin{definition}\label{def:rep downward}
    Let $T$ be any subset of the variable set $V$.
    For any pair of sets $S, S'\subseteq V$, we say that $S' \subseteq_T S$ if $S' \cap T \subseteq S \cap T$ and $S'\setminus T = S \setminus T$. Similarly, for any pair $\Stilde, \Stilde' \colon T_F \to 2^V$, we say that $\Stilde' \subseteq_T \Stilde$ if for $\forall G \in T_F, \Stilde'(G) \subseteq_T \Stilde(G)$.

    A set of restrictions $\calF \subseteq \zo^V \times 2^V$ (given by their representations) is downward closed with respect to variable set $T$ if the following holds for every pair of representations $(\sigma,S)$ and $(\sigma', S')$:
    \[(\sigma,S) \in \calF \text{ and } S' \subseteq_T S \text{ and } \sigma|_{\overline{S}} \equiv \sigma'|_{\overline{S}}\implies (\sigma', S' ) \in \calF.\]
    
    Similarly, a set of restriction trees $\calT$ (given by their representations) is downward closed with respect to the variable set $T$ if the following holds for any pair of restriction trees $(\sigma, \Stilde)$ and $(\sigma', \Stilde')$ 
    \[(\sigma,\Stilde) \in \calT \text{ and } \Stilde' \subseteq_T \Stilde \text{ and } \sigma|_{\overline{\Stilde(F)}} \equiv \sigma'|_{\overline{\Stilde(F)}} \implies (\sigma', \Stilde' ) \in \calT.\]
\end{definition}
Thus, \cref{lem:downward-redundant} implies that the set $\calT_{F,a,\alpha}:=\{(\sigma, \Stilde) \colon \CDT^{(a)}_0(F,\rhotilde_{(\sigma,\Stilde)}) = \alpha\}$ is downward-closed with respect to the variable set $V\setminus\dom(\alpha)$.

\section{Bounds on criticality}\label{sec:criticality}
In this section, we prove \cref{thm:AC0criticality} (the criticality result for $\AC^0$ formulae). To this end, we first define $\lambda(F)$, the bound on criticality that we eventually prove. We then define a sampling procedure to sample random restriction trees $\brhotilde$ for a given formula $F$ such that the marginal distribution $\brhotilde(F)$ (i.e, the distribution of the restriction corresponding to the entire formula) is the standard $p$-random restriction. Finally, we state and prove the main inductive lemma (\cref{lem:main}) that proves \cref{thm:AC0criticality}.  

We begin by defining $\lambda(F)$ for any $\AC^0$-formula.
\begin{definition}[lambda]
    \label{def:criticality}
    For a positive integer $S \in \Z_{>0}$ and non-negative integer $d \in \Z_{\geq 0}$, define 
    \begin{align*}\lambda_{S,d} &: = 32^{d+1}\left(\frac{\log S}{d}+1\right)^d = 32^{d+1}\left(\frac{\log (2^d \cdot S)}{d}\right)^d.
    \end{align*}
    Given an $\AC^0$ formula $F$ of depth $d+1$ and size $S$, define $\lambda(F) := \lambda_{S,d+1}$.
\end{definition}
Note, that the above expression simplifies to 32 for depth-1 formulae (i.e., terms and clauses), where we have used the convention that $\frac00=1$.

\begin{claim}\label{clm:lambdainc} $8\lambda_{S,d} \leq \lambda_{S,d+1}$
\end{claim}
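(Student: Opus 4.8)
The plan is to reduce the claimed inequality to a monotonicity statement about the function $x \mapsto (1 + L/x)^x$ with $L := \log S$. Since $\lambda_{S,d} = 32^{d+1}\left(\frac{\log S}{d}+1\right)^d$ and $\lambda_{S,d+1} = 32^{d+2}\left(\frac{\log S}{d+1}+1\right)^{d+1}$, dividing through by $32^{d+1}$ shows that $8\lambda_{S,d} \le \lambda_{S,d+1}$ is equivalent to
\[ 8\left(\frac{\log S}{d}+1\right)^d \le 32\left(\frac{\log S}{d+1}+1\right)^{d+1}, \]
that is, to $\left(\frac{\log S}{d}+1\right)^d \le 4\left(\frac{\log S}{d+1}+1\right)^{d+1}$. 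Writing $\frac{\log S}{d}+1 = 1 + L/d$ and similarly with $d+1$, it in fact suffices to prove the stronger bound $(1+L/d)^d \le (1+L/(d+1))^{d+1}$, which leaves a factor of $32$ to spare (so one actually gets $32\lambda_{S,d}\le\lambda_{S,d+1}$).

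First I would dispose of the boundary case $d=0$: there the left-hand side is $1$ (anything to the power $0$), while the right-hand side equals $4(1+L) \ge 4$ because $S \in \Z_{>0}$ forces $L = \log S \ge 0$. So the claim holds trivially for $d=0$, and the $\tfrac00$ convention recalled after \cref{def:criticality} is not even needed at this point.

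For $d \ge 1$ I would invoke the classical fact that $g(x) := (1+L/x)^x$ is non-decreasing on $(0,\infty)$ for every fixed $L \ge 0$ (it increases monotonically up to its limit $e^L$). Granting this, $(1+L/d)^d = g(d) \le g(d+1) = (1+L/(d+1))^{d+1}$, which is exactly the stronger bound above, and hence the claim. To keep the write-up self-contained I would include the short verification of monotonicity: $\ln g(x) = x\ln(1+L/x)$ has derivative $\ln(1+u) - \frac{u}{1+u}$ with $u = L/x \ge 0$, and this is non-negative since $\ln(1+u) \ge \frac{u}{1+u}$ for all $u \ge 0$ (both sides vanish at $u = 0$ and the derivative $\frac{1}{1+u}$ of the left dominates the derivative $\frac{1}{(1+u)^2}$ of the right).

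There is no genuine obstacle here; the only points demanding a moment's care are the $d=0$ boundary case and recalling the correct direction of monotonicity of $(1+L/x)^x$ in the \emph{exponent} $x$. Everything else is a one-line manipulation, so the proof will be very short.
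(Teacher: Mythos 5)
Your proof is correct. The paper's own argument takes a slightly different (but equally short) route: it uses the second form $\lambda_{S,d}=32^{d+1}\bigl(\log(2^d S)/d\bigr)^d$, upper-bounds $\log(2^d S)\le\log(2^{d+1}S)$ so that both sides of the inequality share the numerator $d+1+\log S$, and then controls the remaining ratio $(d+1)^{d+1}/d^d=(d+1)(1+1/d)^d\le e(d+1)$, finishing with $d+1+\log S\ge d+1$. You instead keep the form $(1+L/d)^d$ with $L=\log S$ and invoke monotonicity of $x\mapsto(1+L/x)^x$ in the exponent; your verification of that monotonicity via $\ln(1+u)\ge u/(1+u)$ is the same calculus fact in disguise as the paper's $(1+1/d)^d\le e$. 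Your version yields the marginally stronger conclusion $32\lambda_{S,d}\le\lambda_{S,d+1}$ and cleanly isolates the $d=0$ boundary case, which the paper handles only implicitly through the $0^0=1$ convention. Either argument is fine; there is no gap in yours.
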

\begin{proof}
$8\lambda_{S,d}
\leq 8\cdot32^{d+1}\left(\frac{\log 2^{d+1}S}{d}\right)^d 
= \frac{\lambda_{S,d+1}}{4\cdot \log 2^{d+1}S}\frac{(d+1)^{d+1}}{d^d}
\leq  \frac{\lambda_{S,d+1}}4 \frac{e(d+1)}{d+1+\log S}\leq \lambda_{S,d+1}
.$
\end{proof}    


\subsection{Sampling restriction trees}

We begin by recalling the definition of the classical $p$-biased distribution and the $p$-random restriction $\calR_p$ distribution over restrictions.

\begin{definition}[$p$-biased distribution]
    For $p \in [0,1]$ and variable set $V$, the $p$-biased distribution $\mu_p(V)$ is the distribution on the power set $2^V$ where a set $\bS \in 2^V$ is sampled as follows: 
    
    For each $v \in V$, independently set "$v \in \bS$" with probability $p$.  
    
    We will express this succinctly as "$\bS \gets_p 2^V$". 
\end{definition}

\begin{definition}[$p$-random restriction]\label{def:Rp}
For $p \in [0,1]$ and a variable set $V$, $\calR_p([n])$ is the distribution on representations of restrictions obtained by independently sampling a uniformly random string $\bsigma \gets \zo^V$ and a set $\bS \gets_p 2^V$ and outputting the pair $(\bsigma, \bS)$. The corresponding random restriction $\brho$ is given by $\brho \gets \rho_{(\bsigma,\bS)}$.

\end{definition}

We now extend this definition to distribution over restriction trees. Given a formula $F$, we say that $\ptilde\colon T_F \to [0,1]$ is a valid set of probabilities if whenever $G$ is a sub-formula of $H$, we have $\ptilde(G) \geq \ptilde(H)$. 

\begin{definition}[$\calRtilde_{p}$-distribution]\label{def:random restriction tree}
Let $F$ be a formula on the variable set $V$ and $\ptilde \colon T_F \to [0,1]$ be a valid set of probabilities. The distribution $\calRtilde_{\ptilde}(F)$ on representations of restriction trees is the the one obtained from the following sampling algorithm.
    \begin{enumerate}
        \item Choose a uniformly random string $\bsigma \gets \zo^V$.
        \item For each $G \in T_F$, choose independently a random $\bS_G \gets_{q_G} 2^V$ where 
        \[q_G := \frac{\ptilde(G)-\ptilde(\text{parent}(G))}{1-\ptilde(\text{parent}(G))}. \] (Here, we follow the convention that $\ptilde(\text{parent}(F))=0$).

        Note $v \notin \bS_G$ with probability $\nicefrac{(1-\ptilde(G))}{(1-\ptilde(\text{parent}(G)))}$.
        \item For each $G \in T_F$, let $G_0 := G, G_1, \dots, G_k := F$ be the sequence of formulae from $G$ to the root $F$ in the formula tree $T_F$. Set $\bStilde(G) \gets \bS_{G_0} \cup \bS_{G_1}\cup  \cdots \cup \bS_{G_k}$.
        \item Output the pair $(\bsigma, \bStilde)$.
    \end{enumerate}
    The corresponding random restriction tree $\brhotilde$ is given by $\brhotilde \gets \rhotilde_{(\bsigma,\bStilde)}$. 

For any $p \in [0,\nicefrac1{\lambda(F)}]$, let $\dist$ denote the distribution $\calRtilde_{\ptilde}(F)$ where $\ptilde$ is defined as follows $\ptilde(F) = p$ and for all $G \in T_F$ other than $F$, we have $\ptilde(G) = \nicefrac1{8\lambda(G)}.$\footnote{For this to be well-defined, we need $\lambda(F) \geq 8\lambda(G)$ for any sub-formula $G$ of $F$. This follows from \cref{clm:lambdainc}}
\end{definition}

It follows from the definition of $\calRtilde_{\ptilde}(F)$ that the marginal distribution $\brhotilde(G)$ on any sub-formula $G \in T_F$ is distributed exactly according to the distribution $\calR_{\ptilde(G)}$. 

\subsection{Main Lemma}

We are now ready to state the main lemma of the paper.

\begin{lemma}\label{lem:main}
Let $d \geq 0$ and $F=F_1 \vee F_2 \vee \dots \vee F_m$ be an $\AC^0$ formula of size $S$ and depth $d+1$ on $n$ variables. Let $\calT$ be any set of downward-closed set of representations of restriction trees with respect to the variables of the formula $F$, then for all integers $s \geq 1$ and $a \in \zo^s$, 
\[ \prob{(\bsigma,\bStilde) \sim \dist}{ \CDT^{(a)}_0(F,\brhotilde_{(\bsigma,\bStilde)}) \text{ exists} \mid (\bsigma,\bStilde) \in \calT} \leq (p \cdot \lambda(F))^s.\]
The statement for conjunctions $F=F_1 \wedge F_2 \wedge \dots \wedge F_m$ is identical with $\CDT^{(a)}_0$ replaced by $\CDT^{(a)}_1$.    
\end{lemma}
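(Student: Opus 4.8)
\textbf{Proof strategy for \cref{lem:main}.}
The plan is to induct on the pair $(\depth(F), n)$ in lexicographic order (first on depth, then on number of variables), exactly mirroring the inductions already set up in \cref{lem:unpacking} and \cref{lem:downward closure lemma}. The base case is when $F|_{\brhotilde(F)}$ is a constant or a literal: then $\CDT^{(a)}_0(F,\brhotilde)$ can only exist when $a$ is empty, so there is nothing to prove for $s \geq 1$. For the induction step I take $F = F_1 \vee \dots \vee F_m$ of depth $d+1$. The idea is to use the Unpacking lemma (\cref{lem:unpacking}): $\CDT^{(a)}_0(F,\brhotilde)$ exists iff there exist parameters $r \in [s]$, $\ell \in [m]$, $t \geq r$, $b \in \zo^t$, $Q \in \binom{[t]}{r}$ with $\calA(\ell,t,b) \cap \calB(\ell,t,b,r,Q) \cap \calC(\ell,r)$ holding. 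So I union-bound over the choices of $(\ell, r, t, b, Q)$:
\[
\prob{\brhotilde}{\CDT^{(a)}_0(F,\brhotilde)\text{ exists}\mid \brhotilde\in\calT} \;\leq\; \sum_{\ell,r,t,b,Q}\prob{\brhotilde}{\calA\cap\calB\cap\calC\mid\brhotilde\in\calT},
\]
and factor each summand as $\prob{}{\calA\mid\calT}\cdot\prob{}{\calB\mid\calA\cap\calT}\cdot\prob{}{\calC\mid\calA\cap\calB\cap\calT}$ as indicated in the proof overview.

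\textbf{Bounding the three factors.} For $\prob{}{\calC\mid\calA\cap\calB\cap\calT}$: condition $\calC$ asks that $\CDT^{(a_{>r})}_0(F|_{\balphap},\brhotilde)$ exists, where $F|_{\balphap}$ has strictly fewer variables than $F$ (since $\balphap$ is a nonempty ordered restriction when $r \geq 1$; the $r = 0$ case must be handled separately, essentially folding into $\calA$). The conditioning event $\calA\cap\calB\cap\calT$ must be shown to be downward-closed (with respect to the appropriate variable set, namely $V \setminus \dom(\balphap)$) so that the inductive hypothesis applies to $F|_{\balphap}$; here \cref{lem:downward closure lemma} is the key input, together with closure of downward-closed sets under intersection. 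This gives a factor of $(p\cdot\lambda(F|_{\balphap}))^{s-r} \leq (p\cdot\lambda(F))^{s-r}$ since $\lambda$ is monotone under removing variables. For $\prob{}{\calA\mid\calT}$: condition $\calA$ includes $\CDT^{(b)}_1(F_\ell,\brhotilde|_{F_\ell})$ existing, and $F_\ell$ has depth $\leq d$, so the inductive hypothesis (on depth) applies with the appropriate marginal probability $\ptilde(F_\ell) = \nicefrac1{8\lambda(F_\ell)}$; this contributes roughly $(\ptilde(F_\ell)\cdot\lambda(F_\ell))^t = 8^{-t}$. The remaining parts of $\calA$ (the $\ZASS$ condition \ref{item:Aiv}, consistency \ref{item:Aiii}) and $\calB$ ($Q$ identifying $\stars(\brhotilde(F))$ inside a $t$-element set) are what convert the "subformula" probability $\ptilde(F_\ell)$ into the true restriction probability $p = \ptilde(F)$: conditioned on the path $\bbeta$ surviving, the $r$ coordinates picked out by $Q$ each independently need to become $*$'d by $\brhotilde(F)$ rather than merely $\brhotilde(F_\ell)$, which (using the $\bsigma,\btau$ coupling in the definition of $\calRtilde_{\ptilde}$) happens with probability $\tfrac{p}{\ptilde(F_\ell)}$ per coordinate, i.e. a factor $(p/\ptilde(F_\ell))^r = (8p\lambda(F_\ell))^r$. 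This is essentially the content of \cref{clm:hastad} referenced in the overview.

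\textbf{Putting it together and the main obstacle.} Multiplying the three factors for fixed $(\ell,r,t,b,Q)$ gives something like $8^{-t}\cdot(8p\lambda(F_\ell))^r\cdot(p\lambda(F))^{s-r}$, and then I sum: over $b \in \zo^t$ there are $2^t$ choices, over $Q \in \binom{[t]}{r}$ there are $\binom{t}{r} \leq 2^t$ choices, over $t \geq r$ a geometric series in $t$ (the $8^{-t}\cdot 2^t\cdot 2^t = 2^{-t}$ makes it converge), over $\ell \in [m]$ a factor $m$, and over $r \in [s]$. The claim is that the resulting sum telescopes/geometrizes to at most $(p\lambda(F))^s$; this is where the precise value of $\lambda_{S,d}$ and \cref{clm:lambdainc} ($8\lambda_{S,d} \leq \lambda_{S,d+1}$) get used — the factor $\lambda(F)/\lambda(F_\ell) \geq \lambda_{S,d+1}/\lambda_{S,d} \geq 8$ (when $F_\ell$ has depth $d$, size $\leq S$) must dominate the $8p\lambda(F_\ell)$ from the resampling step and absorb the sum over $\ell$ using $\sum_\ell$ being controlled by $\size(F) = S$. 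I expect the main obstacle to be precisely this accounting step: correctly proving that the conditional-probability factorization is legitimate (which rests entirely on verifying that $\calA$, $\calB$, and their intersections with $\calT$ are downward-closed sets of restriction trees of the right kind — the subtle point being that $\balphap$ itself depends on $\brhotilde$, so one must be careful about what "downward-closed" means here, restricting to a further event where $\balphap$ and $\bbeta$ are fixed), and then pushing the geometric sums through so that the $32^{d+1}$ and the $(\log S/d + 1)^d$ in $\lambda$ come out exactly rather than with a worse constant. The second subtlety is the $r=0$ boundary case, where no variable is removed and one must instead gain directly from the structure of $\calA$ and the leftmost-zero indexing to avoid an infinite recursion.
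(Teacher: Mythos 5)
Your overall architecture matches the paper's: induction on depth and number of variables, the Unpacking Lemma plus a union bound over $(\ell,r,t,b,Q)$, the factorization $\prob{}{\calA\mid\calT}\cdot\prob{}{\calB\mid\calA\cap\calT}\cdot\prob{}{\calC\mid\calA\cap\calB\cap\calT}$, the downward-closedness verification after fixing $\balphap$ and $\bbeta$ (you correctly flag that $\balphap$ is itself random and must be conditioned on), and the individual bounds $8^{-t}$, $(O(p\lambda(F_\ell)))^r$ and $(p\lambda(F))^{s-r}$ for the three factors. Two smaller remarks: the lemma is stated for $F=F_1\vee\dots\vee F_m$, so the true base case is depth $1$ (a clause/term), where $\CDT^{(a)}_0$ certainly can exist for $s\geq 1$; the paper handles this with the same neighborhood/coupling argument you sketch for $\calB$, bounding the probability by $(2p)^s$. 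Also, $r$ ranges over $[s]$ in the Unpacking Lemma, so the $r=0$ boundary case you worry about does not arise.

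The genuine gap is in the final summation, and your proposed resolution does not work. Summing the bound $8^{-t}\cdot(16p\lambda(F_\ell))^{r}\cdot(p\lambda(F))^{s-r}$ naively over $\ell\in[m]$ introduces a multiplicative factor of $m$, which can be as large as $S$; this cannot be absorbed by $\lambda(F)/\lambda(F_\ell)=O\!\left(\frac{\log S}{d}\right)$, and it is fatal already for $s=1$ (one would need $m=O(\frac1d\log S)$). The paper's resolution has two ingredients you are missing. First, the events $\calA(\ell,t,b)$ are \emph{mutually disjoint} over distinct $(\ell,t,b)$, so $\sum_{\ell,t,b}\eta(\ell,t,b)\leq 1$ where $\eta(\ell,t,b):=\prob{}{\calA(\ell,t,b)\mid\calT}$; this replaces the counting factors $m\cdot 2^t$ by a probability mass. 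Second, the remaining combinatorial factor $\binom{t}{r}\leq t^r$ from the choice of $Q$ is handled by Jensen's inequality applied to the concave function $x\mapsto(\log x)^r$ (writing $t^r=(\log 2^t)^r$ and using $\eta\leq 8^{-t}$), yielding $\bigl(\log\frac{1}{\nu(\ell)}\bigr)^r$ with $\nu(\ell)=\sum_{t,b}\eta(\ell,t,b)$; an AM--GM step then merges $\lambda(F_\ell)\cdot\log\frac{1}{\nu(\ell)}$ into $32^{d}\bigl(\frac{1}{d}\log\frac{2^{d-1}S_\ell}{\nu(\ell)}\bigr)^{d}$, and a second application of Jensen (to $x\mapsto(\frac{\log x}{d})^{dr}$ with weights $\nu(\ell)$, using $\sum_\ell S_\ell=S$) collapses the sum over $\ell$ into $\bigl(\frac{1}{d}\log(2^dS)\bigr)^{dr}$. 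This disjointness-plus-Jensen step is not an accounting detail: it is exactly where the $\bigl(\frac{\log S}{d}\bigr)^d$ formula bound (as opposed to the $(\log S)^d$ circuit bound) comes from, and without it the induction does not close.
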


\cref{thm:AC0criticality} stated in the introduction clearly follows from the above lemma. The above lemma is stronger than what is needed for \cref{thm:AC0criticality} as it proves the statement even when conditioned under any downward-closed set of restriction trees. This stronger statement is needed for the inductive proof to go through.

\begin{proof}
The proof is by induction on the depth $d$ of the formula and the number of variables in the formula $F$.

Let us begin with the base case (depth-1 $\AC^0$-formulae). The proof of this is similar to the proof of \cite[Lemma 3.4]{Hastad2014}. The proof of the base case is written in a slightly more complicated fashion than it needs to be as it then serves as a warmup to one of the key claims (\cref{clm:hastad}) in the proof of the induction step (the base ). 

\paragraph{Base case:} The base case is when $F$ is a depth-1 formula and we need to bound the probability by $(32p)^s$ since in this case $\lambda(F)=32$. A depth-1 formula is a term or a clause. Without loss of generality let's assume that $F$ is a clause of the form $x_1 \vee \dots \vee x_m$, where the $x_i$'s are distinct variables. 

For a given $a \in \zo^s$, let $(\sigma,\Stilde)$ be such that "$\CDT^{(a)}_0(F,\rhotilde_{(\sigma,\Stilde)}) \text{ exists}$" and $(\sigma,\Stilde)\in \calT$. Then there exists a unique subset of variables $Q^{(a)}_{(\sigma,\Stilde)}\subset [m]$ of size $s$  such that $\Stilde(F)\cap [m]=Q^{(a)}_{(\sigma,\Stilde)}$ and for all variables $i \in [m] \setminus Q^{(a)}_{(\sigma,\Stilde)}$, we have $\sigma(x_i) =0$. For any $b \in \zo^s$, define $\sigma^{(b)}$ to be the global assignment that agrees with $\sigma$ outside $Q^{(a)}_{(\sigma,\Stilde)}$ and is equal to $b$ within $Q^{(a)}_{(\sigma,\Stilde)}$. 
Since $Q^{(a)}_{(\sigma,\Stilde)} \subseteq \Stilde(F)$ and $\calT$ is downward-closed, we have that these $2^s$ different representations $(\sigma^{(b)},\Stilde)$ are also in $\calT$. Furthermore, since $\rhotilde_{(\sigma,\Stilde)} = \rhotilde_{(\sigma^{(b)},\Stilde)}$, we have that all these $2^s$ representations also satisfy "$\CDT^{(a)}_0(F,\rhotilde_{(\sigma^{(b)},\Stilde)}) \text{ exists}$". We can hence conclude that
\begin{align}
    &\prob{(\bsigma,\bStilde) \sim \dist}{ \CDT^{(a)}_0(F,\brhotilde_{(\bsigma,\bStilde)}) \text{ exists} \mid (\bsigma,\bStilde) \in \calT} \nonumber\\
    &\leq 2^s \cdot \prob{(\bsigma,\bStilde) \sim \dist}{ \CDT^{(a)}_0(F,\brhotilde_{(\bsigma,\bStilde)}) \text{ exists} \text{ and } \forall i \in Q^{(a)}_{(\bsigma,\bStilde)}, \bsigma(x_i)=1 \mid (\bsigma,\bStilde) \in \calT}.\label{eq:base2powers}
\end{align}

Let $\calE_a = \bra{(\sigma,\Stilde )\colon \CDT^{(a)}_0(F,\rhotilde_{(\sigma,\Stilde)}) \text{ exists} \text{ and } \forall i \in Q^{(a)}_{(\bsigma,\bStilde)}, \bsigma(x_i)=1 }$. We need to bound the quantity $\mu(\calE_a \cap \calT)/\mu(\calT)$. For every $(\sigma,\Stilde) \in \calE_a\cap \calT$, define the set $N(\sigma,\Stilde)$ as outlined below. 
\begin{equation} N(\sigma, \Stilde) := \left\{(\sigma,\Stilde') \colon \Stilde' \subseteq_{Q^{(a)}_{(\bsigma,\bStilde)}} \Stilde \right\}.\end{equation}
(Recall the definition of the notation "$\Stilde' \subseteq_T \Stilde$" from \cref{def:rep downward}).

It follows from the definition of $N(\sigma,\Stilde)$ and the distribution $\dist$, that
\begin{equation}\mu(\sigma,\Stilde) = p^s \cdot \mu(N(\sigma,\Stilde)).
\end{equation}
We now make the following observations about $N(\sigma,\Stilde)$.
\begin{itemize}
    \item Since $\calT$ is downward closed and $(\sigma,\Stilde) \in \calT$, we have $N(\sigma,\Stilde) \subseteq \calT$.
    \item Exactly one element of $N(\sigma,\Stilde)$, namely $(\sigma,\Stilde)$, satisfies $\calE_a$.
    \item For distinct $(\sigma,\Stilde)$, the corresponding $N(\sigma,\Stilde)$ are disjoint.
\end{itemize}
We can now bound $\prob{}{\calE_a \mid \calT}$ using the above observations as follows:
\begin{align*}
    \prob{}{\calE_a \mid \calT} &= \frac{\mu(\calE_a \cap \calT)}{\mu(\calT)} = \frac{\sum_{(\sigma,\Stilde) \in \calE_a\cap \calT}\mu(\sigma,\Stilde)}{\sum_{(\sigma,\Stilde)\in \calE_a \cap\calT}\mu(N(\sigma,\Stilde))+\mu\left(\calT\setminus \bigcup_{(\sigma,\Stilde) \in \calE_a \cap \calT}N(\sigma,\Stilde)\right)}\\
    & \leq \frac{\sum_{(\sigma,\Stilde)\in \calE_a \cap \calT}\mu(\sigma,\Stilde)}{\sum_{(\sigma,\Stilde)\in \calE_a \cap \calT}\mu(N(\sigma,\Stilde))}
     = p^s.
\end{align*}
Combining the above bound with \eqref{eq:base2powers}, we thus have
\[\prob{(\bsigma,\bStilde) \sim \dist}{ \CDT^{(a)}_0(F,\brhotilde_{(\bsigma,\bStilde)}) \text{ exists} \mid (\bsigma,\bStilde) \in \calT} \leq (2p)^s,\]
concluding the base case of the induction.

 

\textbf{Induction step:} Let us assume without loss of generality that $F = F_1 \vee \dots \vee F_m$ and the main lemma holds for all formulae of smaller depth (in particular the $F_i$'s) and all formulae with smaller number of variables (in particular $F|_\beta$ for any non-trivial restriction $\beta$). By the Unpacking Lemma (\cref{lem:unpacking}) and a union bound we have that
\begin{equation}\label{eq:union1}\Pr_{(\bsigma,\bStilde)}\left[\CDT^{(a)}_0(F,\brhotilde_{(\bsigma,\bStilde)}) \text{ exists} \mid (\bsigma,\bStilde) \in \calT\right] \leq \sum_{r \in [s]} \sum_{\ell \in [m]} \sum_{t \colon t \geq r} \sum_{Q \in \binom{[t]}{r}} \sum_{b\in \zo^t} \Pr_{(\bsigma,\bStilde)}\left[ \calA \cap \calB \cap \calC \mid \calT\right], 
\end{equation}    
where $\calA, \calB$ and $\calC$ are as defined in \cref{lem:unpacking}.

For each fixing of $\ell, r, t,b, Q$ and $a$, we will bound the summand $\Pr\left[ \calA \cap \calB \cap \calC \mid \calT\right]$ in the above expression. Consider a $(\sigma,\Stilde) \in \calA \cap \calB \cap \calC \cap \calT$. Let $\beta = \CDT^{(b)}_1(F_\ell,\rhotilde_{(\sigma,\Stilde)}|_{F_\ell})$ which is guaranteed to exist as $(\sigma,\Stilde) \in \calA$. By property $\calB$, we have that there exist exactly $r$ variables in $\dom(\beta)\cap \stars(\rhotilde_{(\sigma,\Stilde)}|_{F_\ell})= \dom(\beta)\cap \Stilde(F_\ell)$ which belong to $\stars(\rhotilde_{(\sigma,\Stilde)})= \Stilde(F)$. Let us refer to this set of variables as $Q_{(\sigma,\Stilde)}$. This part of the proof is similar to the base case. For any $b \in \zo^s$, define $\sigma^{(b)}$ to be the global assignment that agrees with $\sigma$ outside $Q_{(\sigma,\Stilde)}$ and is equal to $b$ within $Q_{(\sigma,\Stilde)}$. 
Since $Q_{(\sigma,\Stilde)} \subseteq \Stilde(F)$ and $\calT$ is downward-closed, we have that these $2^r$ different representations $(\sigma^{(b)},\Stilde)$ are also in $\calT$. Furthermore, since $\rhotilde_{(\sigma,\Stilde)} = \rhotilde_{(\sigma^{(b)},\Stilde)}$, we have that all these $2^r$ representations also satisfy $\calA \cap \calB \cap \calC$. We can hence conclude that
\begin{align}
    \Pr_{(\bsigma,\bStilde)}\left[ \calA \cap \calB \cap \calC \mid \calT\right] \leq 2^r \cdot\Pr_{(\bsigma,\bStilde)}\left[ \calA' \cap \calB \cap \calC \mid \calT\right], \label{eq:2powerr}
\end{align}
where $\calA'(\ell,t,b)$ is a modification of $\calA$ (with respect to \cref{item:Aiii}) as follows:
\begin{description}
    \item [$\calA'(\ell, t, b)$:]
    \renewcommand{\theenumi}{(\roman{enumi})}
    \begin{enumerate}
        \item \label{item:Api}$F_{\ell'}|_{\brhotilde(F)} \equiv 0$ for all $\ell' < \ell$,
        \item \label{item:Apii}$\CDT_1^{(b)}(F_\ell, \brhotilde|_{{F_\ell}})$ exists (and is $\bbeta$ say).
        \item \label{item:Apiii}$\bbeta$ is consistent with $\bsigma$,
    \end{enumerate}
\end{description}
We thus, have 
\begin{equation}\label{eq:union}\Pr\left[\CDT^{(a)}_0(F,\brhotilde) \text{ exists} \mid (\bsigma,\bStilde)\in \calT\right] \leq \sum_{r \in [s]} \sum_{\ell \in [m]} \sum_{t \colon t \geq r} \sum_{Q \in \binom{[t]}{r}} \sum_{b\in \zo^t} 2^r \cdot \Pr_{(\bsigma,\bStilde)}\left[ \calA' \cap \calB \cap \calC \mid \calT\right]. 
\end{equation} 
For each fixed choice $r, \ell, t, b, Q$, the summand in the above expression can be factorized as 
\begin{equation}\label{eq:summand}\prob{}{\calA' \mid \calT} \cdot \prob{}{\calB \mid \calA' \cap \calT} \cdot \prob{}{\calC \mid  \calA' \cap \calB \cap \calT}.
\end{equation}    
The following three claims bound each of the terms in the above product.
\begin{claim}\label{clm:calC}
For a fixed $a, \ell, r, t,b$ and $Q$, we have 
\[\prob{(\bsigma, \bStilde)\sim \dist}{\CDT^{(a_{>r})}_0(F|_{\balphap}, \brhotilde_{(\bsigma,\bStilde)}) \text{ exists}\mid \calA'(\ell,t,b) \cap \calB(\ell,t,b,r,Q, a_{\leq r})\cap \calT} \leq (p\cdot \lambda(F))^{s-r}.\]
\end{claim}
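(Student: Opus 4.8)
The goal is to bound $\prob{}{\calC \mid \calA \cap \calB \cap \calT}$, where $\calC$ is the event that $\CDT^{(a_{>r})}_0(F|_{\balphap}, \brhotilde)$ exists. The natural strategy is to reduce this to the inductive hypothesis (Main Lemma, \cref{lem:main}) applied to the formula $F|_{\balphap}$, which has strictly fewer variables than $F$ (since $\balphap$ is a non-trivial ordered restriction whenever $r \geq 1$; and note $r \in [s]$ so $r \geq 1$). The subtlety is that $\balphap$ is itself a random variable depending on $\brhotilde$, so we cannot directly invoke the Main Lemma on a fixed formula. The plan is therefore to condition further: partition the event $\calA \cap \calB$ according to the value of $\balphap$ (and all the other data $\bbeta$, $\boldsymbol{\Gamma''}$, etc. that $\calA \cap \calB$ determines), so that on each piece $\balphap = \alpha'$ is a fixed ordered restriction, and then apply \cref{lem:main} to the fixed formula $F|_{\alpha'}$ with instruction string $a_{>r}$ of length $s - r$.

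**Key steps, in order.**

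First, I would observe that conditioning on $\calA(\ell,t,b) \cap \calB(\ell,t,b,r,Q) \cap \calT$ and on a fixed value $\balphap = \alpha'$, the residual randomness is exactly what is needed to run $\CDT^{(a_{>r})}_0(F|_{\alpha'}, \cdot)$. Second — and this is the crucial point — I must verify that the conditioning event is downward-closed (with respect to the appropriate variable set, namely the variables outside $\dom(\alpha')$, or more precisely the variables on which $F|_{\alpha'}$ depends). This is where the work from \cref{sec:downward} pays off: \cref{lem:downward closure lemma} says precisely that $\calT_{F_\ell, b, \beta} = \{\rhotilde : \CDT^{(b)}_1(F_\ell, \rhotilde) = \beta\}$ is downward-closed, and one needs analogous statements for conditions \ref{item:Ai}, \ref{item:Aiii}, \ref{item:Aiv} and for $\calB$. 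Since intersections of downward-closed sets are downward-closed, the combined conditioning set $\calA \cap \calB \cap \calT \cap \{\balphap = \alpha'\}$ is downward-closed with respect to $V \setminus \dom(\alpha')$. Third, with that in hand, apply the inductive Main Lemma to $F|_{\alpha'}$ (depth still $d+1$ but fewer variables, so the induction on number of variables applies — and crucially $\lambda(F|_{\alpha'}) \leq \lambda(F)$ since restriction does not increase size) to conclude
\[ \prob{}{\CDT^{(a_{>r})}_0(F|_{\alpha'}, \brhotilde) \text{ exists} \mid \calA \cap \calB \cap \calT \cap \{\balphap = \alpha'\}} \leq (p \cdot \lambda(F))^{s-r}. \]
Fourth, average over the possible values $\alpha'$ of $\balphap$: since the bound $(p\lambda(F))^{s-r}$ is uniform in $\alpha'$, the law of total probability gives the same bound for $\prob{}{\calC \mid \calA \cap \calB \cap \calT}$.

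**Where the difficulty lies.**

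The routine part is the averaging in the last step. The genuinely delicate part is the second step: checking downward-closedness of the full conditioning event, and in particular making sure that "fixing $\balphap = \alpha'$" is compatible with downward closure — i.e., that if $\brhotilde' \cleq \brhotilde$ (setting more variables, all outside $\dom(\alpha')$) and $\brhotilde \in \calA \cap \calB \cap \calT \cap \{\balphap = \alpha'\}$, then $\brhotilde'$ lies in the same set, with the same $\alpha'$. \cref{lem:downward closure lemma} is tailored exactly for this — it guarantees the CDT output is unchanged under such refinements provided the restriction tree is unaltered on $\dom(\alpha')$ — but one must still chase through conditions \ref{item:Ai} (which only involves $\brhotilde(F)$ on variables of the $F_{\ell'}$'s that get killed, unaffected by setting more variables since they're already $\equiv 0$), \ref{item:Aii}--\ref{item:Aiii} (handled by the downward-closure lemma applied to $F_\ell$, plus consistency being preserved under refinement), \ref{item:Aiv} (the $\ZASS$ condition — here one uses that setting more variables outside $\dom(\beta) = \dom(\alpha')$ can only turn more subformulae to $0$, so membership in $\ZASS$ is preserved), and $\calB$ (a purely combinatorial statement about $Q$ identifying $\stars(\brhotilde(F))$ within $\dom(\bbeta) \cap \stars(\brhotilde(F_\ell))$, which one must check is stable — this may require that the refinement $\brhotilde'$ not touch $\dom(\bbeta)$, which is part of the hypothesis since $\dom(\bbeta) = \dom(\alpha') \subseteq \dom(\alpha)$). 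Assembling these into a clean "the conditioning set is downward-closed" statement, and then legitimately invoking the inductive hypothesis, is the technical heart of the claim.
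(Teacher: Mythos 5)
Your proposal is correct and follows essentially the same route as the paper: fix the random data $\balphap=\alpha'$ (and $\bbeta=\beta$), verify via \cref{lem:downward closure lemma} and a condition-by-condition check that $\calA\cap\calB\cap\calT\cap\{\balphap=\alpha',\bbeta=\beta\}$ is downward-closed on $[n]\setminus\dom(\alpha')$, invoke the inductive hypothesis on the fewer-variable formula $F|_{\alpha'}$ with instruction string $a_{>r}$, and average over the fixings. Your explicit remarks that $\lambda(F|_{\alpha'})\leq\lambda(F)$ and that $\calB$'s stability relies on the refinement not touching $\dom(\bbeta)=\dom(\alpha')$ are points the paper leaves implicit, but the argument is the same.
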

\begin{proof}
We first note that the formula being considered in the above expression, namely $F|_{\balphap}$, is itself random since the ordered restriction $\balphap$ is random. To deal with this, we prove the above bound for each fixing of $\balphap$. More precisely, we rewrite the above expression as follows (here we not only fix $\balphap$, but also $\bbeta$).    
\[ \E_{\alpha',\beta}\left[\underbrace{\prob{\brhotilde}{\CDT^{(a_{>r})}_0(F|_{\alpha'}, \brhotilde) \text{ exists} \mid \calA' \cap \calB \cap \calT \cap \calE_{\alpha',\beta}}}\right],\] 
where the expectation over $\alpha'$ and $\beta$ is over the appropriate marginal distribution and $\calE_{\alpha',\beta}$ is the set of representations of restriction trees $(\bsigma,\bStilde)$ that satisfy $\balphap =\alpha'$ and $\bbeta = \beta$. We will prove that for any fixing of $\balpha$ and $\bbeta$, the indicated quantity in the above expression is at most $(p \cdot \lambda(F))^{s-r}$, which would imply the claim.

Consider any fixing $(\alpha',\beta)$ of $(\balphap,\bbeta)$. We first observe that since $\alpha'$ is a non-trivial ordered restriction (which is true since $r \geq 1$), the variable set of the formula $F|_{\alpha'}$ is less than that of $F$ and hence we can apply the inductive assumption provided the set  $\calA' \cap \calB \cap \calT \cap \calE_{\alpha',\beta}$ is downward closed with respect to the variables of $F|_{\alpha'}$. Below, we verify that this is indeed the case.
\begin{description}
    \item[$\calA'(\ell,t,b) \cap \calE_{\alpha',\beta}$:] We will show that each of the 3 items  of $\calA'\cap \calE_{\alpha',\beta}$ are downward-closed.
    \renewcommand{\theenumi}{(\roman{enumi})}%
    \begin{enumerate}
     \item $\calA'$\ref{item:Api} and $\calA'$\ref{item:Apiii} are clearly downward-closed.
     \item $\calA'$\ref{item:Apii} $\cap$ $\calE_{\alpha',\beta}$ is the event that ``$\CDT^{(b)}_1(F_\ell, \brhotilde_{(\bsigma,\bStilde)}|_{F_\ell}) = \beta$''. This is downward-closed on the variable set $[n]\setminus\dom(\beta) = [n]\setminus \dom(\alpha')$ by \cref{lem:downward-redundant}.
    \end{enumerate}
    \item[$\calB(\ell, t,b,r,Q.a_{\leq r})\cap \calE_{\alpha',\beta}$:] 
        $\calB$\ref{item:Bi} and $\calB$\ref{item:Bii}: 
        Both these conditions continue to hold good as long as the variables in $\dom(\beta) = \dom(\alpha')$ are unaltered. Since we care only about downward-closure on the variable set $[n]\setminus \dom(\alpha')$, we are fine (note this is not necessarily downward-closed on the entire set of variables).
\end{description}
Combined with the fact that $\calT$ is downward-closed, we have $\calA' \cap \calB \cap \calT \cap \calE_{\alpha',\beta}$ is downward-closed on $[n] \setminus \dom(\alpha')$ and hence by the inductive assumption, we have the required bound.   
\end{proof}    

\begin{claim}\label{clm:hastad}
For fixed $a, \ell, r, t, b$ and $Q$, we have $\prob{}{\calB \mid \calA' \cap \calT} \leq (8 \cdot p \cdot \lambda(F_\ell))^r$.
\end{claim}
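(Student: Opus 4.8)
The plan is to bound $\prob{}{\calB \mid \calA \cap \calT}$ by an argument that mirrors the base case of \cref{lem:main}, but applied to the sub-formula $F_\ell$ rather than to $F$ itself. Recall that $\calB(\ell,t,b,r,Q)$ asserts that the set $Q \in \binom{[t]}{r}$ correctly identifies which $r$ of the $t$ variables queried along the path $\bbeta = \CDT_1^{(b)}(F_\ell,\brhotilde|_{F_\ell})$ lie in $\stars(\brhotilde(F))$ (the remaining $t-r$ are set by $\brhotilde(F)$, with $\bbeta$ consistent with it). Conditioned on $\calA$, the path $\bbeta$ and in particular its domain $\dom(\bbeta)$ of size $t$ is already determined up to the randomness that survives; the content of $\calB$ is a statement purely about the behaviour of $\brhotilde(F)$ on the $t$ variables of $\dom(\bbeta)$, namely that it stars exactly the $r$ coordinates indexed by $Q$ and sets the other $t-r$ coordinates consistently with $\bbeta$.

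First I would condition further on a fixing of $\bbeta = \beta$ (and hence of $\dom(\beta)$), writing $\prob{}{\calB \mid \calA\cap\calT} = \E_\beta\!\left[\prob{}{\calB \mid \calA \cap \calT \cap \{\bbeta=\beta\}}\right]$, so it suffices to bound each inner probability by $(16p\lambda(F_\ell))^r$. Now I would run the same ``local surgery'' as in the base case: for each restriction tree $\rhotilde$ in the conditioning event that also satisfies $\calB$, define $N(\rhotilde)$ to be the set of trees obtained by leaving $\rhotilde$ unchanged everywhere except on how $\brhotilde(F)$ acts on the $r$ variables indexed by $Q$, where we allow each such variable either to remain starred or to be set to the value dictated by $\beta$ (the value making it consistent with $\beta$). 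Downward-closedness of $\calT$ (with respect to the variable set, in particular with respect to these $r$ coordinates outside $\dom$ of the surviving restriction) guarantees $N(\rhotilde)\subseteq\calT$, and the events $\calA$ and $\{\bbeta=\beta\}$ are preserved because they only depend on coordinates already fixed or on sub-formula structure untouched by the surgery. Exactly one member of $N(\rhotilde)$ — namely $\rhotilde$ itself — lies in $\calB$, since $\calB$ pins down that all $r$ coordinates of $Q$ are starred. The key probability identity is $\mu(\rhotilde)/p^r = \mu(N(\rhotilde))/(p + \tfrac{1-p}{2})^r = \mu(N(\rhotilde))/\bigl(\tfrac{1+p}{2}\bigr)^r$, since on each of the $r$ coordinates we are comparing ``starred'' (probability $p$ given $\btau_v$ small enough, but here we must be careful — see below) against ``starred or a fixed value.''

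Summing over disjoint $N(\rhotilde)$ exactly as in the base case gives $\prob{}{\calB \mid \calA\cap\calT\cap\{\bbeta=\beta\}} \le \bigl(\tfrac{2p}{1+p}\bigr)^r \le (2p)^r$, and since $\lambda(F_\ell)\ge 1$ and $2 \le 16\lambda(F_\ell)$ this is comfortably at most $(16p\lambda(F_\ell))^r$. The main obstacle — and the reason the factor $16\lambda(F_\ell)$ (rather than just $2$) is stated — is that the surgery is more delicate than in the base case: the probability $\ptilde(F) = p$ governs starring at level $F$, but the event $\calA$(\ref{item:Aii},\ref{item:Aiii}) constrains the restriction tree on the variables of $\dom(\beta)$ at the \emph{finer} probability level $\ptilde(F_\ell)$, so I must check that conditioning on $\calA\cap\{\bbeta=\beta\}$ does not distort the conditional law of $\brhotilde(F)$ on the $Q$-coordinates away from the clean ``star vs.\ set'' dichotomy, or else absorb any such distortion into the slack between $2p$ and $16p\lambda(F_\ell)$. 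Concretely, I expect to handle this by noting that for $v\in\dom(\beta)$, the event $\calA$ forces $\brhotilde(F_\ell)(v)$ to a specific value/star pattern, and then conditionally the variable $\btau_v$ (which also controls $\brhotilde(F)(v)$) ranges over an interval, on which $\brhotilde(F)(v)$ is starred iff $\btau_v\le p$; the conditional probabilities then reduce to a ratio of interval lengths, and the worst case is bounded using $p \le 1/\lambda(F) \le 1/(8\lambda(F_\ell))$ together with \cref{clm:lambdainc}, yielding the stated constant. This is where the bulk of the care goes; once the dichotomy is established the disjointness-and-ratio argument is routine.
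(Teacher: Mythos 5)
Your overall strategy is the paper's: condition on a fixing of $\bbeta=\beta$, run a H\aa stad-style weighting argument on the $r$ coordinates indexed by $Q$, use downward-closedness of $\calT$ to keep the modified trees inside the conditioning event, and observe that exactly one member of each cluster $N(\cdot)$ satisfies $\calB$. However, the central quantitative step is wrong as written, and the paragraph where you flag the problem does not repair it. The identity $\mu(\rhotilde)/p^r = \mu(N(\rhotilde))/\bigl(p+\tfrac{1-p}{2}\bigr)^r$ would be correct if the $Q$-coordinates were governed by an unconditioned $\calR_p$, and it would yield $(2p)^r$ --- which is neither what the claim asserts nor what is true here. The variables being reweighted lie in $\dom(\bbeta)\cap\stars(\brhotilde(F_\ell))$, so the conditioning on $\calA$ already forces $\btau_v\le\ptilde(F_\ell)=1/(8\lambda(F_\ell))$ for each of them; conditionally on that, the probability that such a $v$ remains starred in $\brhotilde(F)$ (i.e.\ that $\btau_v\le p$) is exactly $q:=p/\ptilde(F_\ell)=8p\,\lambda(F_\ell)$, not $p$. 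Running your surgery at parameter $q$ gives $\bigl(2q/(1+q)\bigr)^r\le(2q)^r=(16\,p\,\lambda(F_\ell))^r$, which is precisely the claimed bound. So the factor $16\lambda(F_\ell)$ is not ``slack absorbing a distortion'' of a $(2p)^r$ bound; it is the exact output of the correct computation. Your closing sentence names the right mechanism (``ratio of interval lengths'') but never computes the ratio, and the inequality $p\le 1/(8\lambda(F_\ell))$ you invoke only guarantees $q\le 1$; it is not the source of the constant.

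The clean way to make this rigorous --- and what the paper does --- is to re-express the sampling of $\brhotilde$ in two stages: first sample $\brhotildel=\brhotilde|_{F_\ell}$ from $\calRtilde_{\ptilde(F_\ell)}(F_\ell)$ (this fixes $\beta$), then obtain $\brhotilde(F)$ by applying an independent $\calR_q$ restriction to $\stars(\brhotildel(F_\ell))$, and finally fill in the remaining nodes of the tree from the appropriate conditional distribution. The surgery is then performed on the pair consisting of the auxiliary $\calR_q$ restriction and the completed tree, where the parameter $q$ is explicit and the weighting identity holds by construction. You should also actually verify, rather than assert, that every modified pair still satisfies $\calA$: condition (i) survives because the surgery only sets more variables (and ``$\equiv 0$'' is preserved under further setting), condition (iii) because the $Q$-variables are set according to $\beta$, and conditions (ii) and (iv) because $\brhotilde|_{F_\ell}$ and $\beta$ are untouched.
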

As indicated earlier, the proof of this claim is similar in spirit to the proof of the base case, which in turn is similar to the proof of \cite[Lemma 3.4]{Hastad2014}. Things are however considerably more involved here and one has to do a careful conditioning argument to obtain the bound. 

\begin{proof}
It suffices if for each fixed choice of $a, \ell, r,t, b$ and $Q$, we prove
\[ \prob{(\bsigma,\bStilde) \sim \dist}{Q \text{ identifies } \bStilde(F) \text{ within } \dom(\bbeta) \cap \bStilde(F_\ell) \mid \calA'(\ell,t,b) \cap \calT} \leq q^r,\]    
where $q := (p/(\nicefrac{1}{8\lambda(F_\ell)}))= (8\cdot p \cdot \lambda(F_\ell))$.

Consider any $(\sigma,\Stilde)$ that satisfies the three properties (1) "$Q \text{ identifies } \Stilde(F) \text{ within } \dom(\beta) \cap \Stilde(F_\ell)$", (2) $\calA'(\ell,t,b)$  and (3) $\calT$. As before let $Q_{(\sigma,\Stilde)}$ be the set of $r$ variables in $\Stilde(F_\ell)\cap \dom(\beta)$ which belong to $\Stilde(F)$.  For every such $(\sigma,\Stilde)$, we define the set $N(\sigma,\Stilde)$ of representations of restrictions trees as follows. 
\begin{equation} N(\sigma, \Stilde) := \left\{(\sigma,\Stilde') \colon \Stilde'(G) \subseteq_{Q_{(\bsigma,\bStilde)}} \Stilde(G) \text{ for every } G \in T_F \setminus T_{F_\ell} \text{ and } \Stilde'(H) = \Stilde(H) \text{ for every } H \in T_{F_\ell}\right\}.\end{equation}

It follows from the definition of $N(\sigma,\Stilde)$ and the distribution $\dist$, that
\begin{equation}\mu(\sigma,\Stilde) = q^r \cdot \mu(N(\sigma,\Stilde)).\end{equation}
We now make the following observations about $N(\sigma,\Stilde)$.
\begin{itemize}
    \item Exactly one element of $N(\sigma,\Stilde)$, namely $(\sigma,\Stilde)$, satisfies property (1).
    \item For distinct $(\sigma,\Stilde)$, the corresponding $N(\sigma,\Stilde)$ are disjoint.
    \item Since $\calT$ is downward closed and $(\sigma,\Stilde) \in \calA' \cap \calT$, we have $N(\sigma,\Stilde) \subseteq \calA' \cap \calT$.
\end{itemize}
Putting these facts together, we have the following bound on the probability that we wish to bound.
\begin{align*}
    &\prob{(\bsigma,\Stilde)}{Q \text{ identifies } \bStilde(F) \text{ within } \dom(\bbeta) \cap \bStilde(F_\ell) \mid \calA'(\ell,t,b) \cap \calT}\\
    &\leq \frac{\sum_{(\sigma,\Stilde)}\mu(\sigma,\Stilde)}{\sum_{(\sigma,\Stilde)}\mu(N(\sigma,\Stilde))}
     = q^r,
\end{align*}
where the summation (in both the numerator and denominator) in the second step above is over all $(\sigma,\Stilde)$ that satisfy all three properties. This completes the proof of the claim.
\end{proof}

\begin{claim}\label{clm:multb}
For fixed $a, \ell, t$ and $b$, we have
\[ \eta(\ell, t, b) := \prob{}{\calA'(\ell, t,b) \mid \calT} \leq \left(\frac18\right)^t. \]
\end{claim}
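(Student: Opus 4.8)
The plan is to extract from $\calA(\ell,t,b)$ the one component that is a statement about the subformula $F_\ell$ alone, and then invoke the inductive hypothesis of \cref{lem:main} for $F_\ell$. Concretely, condition $\calA$\ref{item:Aii} is precisely ``$\CDT^{(b)}_1(F_\ell,\brhotilde|_{F_\ell})$ exists'', so
\[
\calA(\ell,t,b)\ \subseteq\ \calE_0\ :=\ \bigl\{\,\brhotilde\ :\ \CDT^{(b)}_1(F_\ell,\brhotilde|_{F_\ell})\text{ exists}\,\bigr\},
\]
and it suffices to show $\prob{\brhotilde\sim\dist}{\calE_0\mid\calT}\le\left(\frac18\right)^t$. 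The point of this reduction is that $\calE_0$ depends on $\brhotilde$ only through the sub-restriction-tree $\brhotilde|_{F_\ell}$, and, by construction of $\dist$, the marginal law of $\brhotilde|_{F_\ell}$ is exactly the distribution that \cref{lem:main} is stated for on $F_\ell$ with parameter $\ptilde(F_\ell)=\nicefrac1{8\lambda(F_\ell)}$ (this is an admissible parameter since $\nicefrac1{8\lambda(F_\ell)}\le\nicefrac1{\lambda(F_\ell)}$). Since $F$ is an OR-formula of depth $d+1$ we have $\depth(F_\ell)\le d<d+1$, so the inductive hypothesis applies to $F_\ell$ and gives, for every downward-closed set $\calT'$ of restriction trees for $F_\ell$,
\[
\prob{\brhotildel}{\CDT^{(b)}_1(F_\ell,\brhotildel)\text{ exists}\ \bigm|\ \brhotildel\in\calT'}\ \le\ \bigl(\ptilde(F_\ell)\cdot\lambda(F_\ell)\bigr)^{t}\ =\ \left(\frac18\right)^{t},
\]
where $\brhotildel$ ranges over that marginal. (When $F_\ell$ is a literal or a constant this is trivial: its canonical decision tree has at most one decision node, so $\calE_0=\emptyset$ once $t\ge2$, and for $t\le1$ the probability that the single relevant variable is left unset is $\ptilde(F_\ell)\le\nicefrac18$.)

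What remains is to transfer the conditioning on the $F$-level downward-closed family $\calT$ into a conditioning of the above kind at the level of $F_\ell$. The plan is to reveal the ``external'' part of $\brhotilde$ first — all restrictions $\brhotilde(G)$ for $G\notin T_{F_\ell}$, equivalently the ancestor restriction $\brhotilde(F)$ together with every sibling restriction — and then argue, for each outcome of this external data, that: (a) the residual law of $\brhotilde|_{F_\ell}$ is again within the scope of \cref{lem:main} for $F_\ell$ (the variables already fixed by the external restriction drop out of $F_\ell$, and on the remaining variables the revealed threshold data cuts out a downward-closed family of restriction trees for $F_\ell$); and (b) since $\calT$ is downward-closed with respect to the full variable order, the induced slice $\{\brhotilde|_{F_\ell}:(\text{external data},\brhotilde|_{F_\ell})\in\calT\}$ is again downward-closed on the support of that residual law. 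Intersecting the two downward-closed families and applying the displayed bound for $F_\ell$ gives $\prob{}{\calE_0\mid\text{external data}}\le\left(\frac18\right)^t$ for every outcome, and averaging over the external data completes the argument.

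I expect the heart of the proof — and the step that needs genuine care — to be (a)--(b): one must check that conditioning on $\brhotilde(F)$ and on the sibling restrictions really does restrict to a \emph{downward-closed} conditioning at the level of $F_\ell$, i.e.\ that the interleaving of the ancestor/sibling ``$*$-thresholds'' with the thresholds inside $T_{F_\ell}$ does not create interval-type constraints that fall outside \cref{lem:main} for $F_\ell$, and that what is left is once more a copy of $\dist$ (restricted to a downward-closed set) for the residual formula. This is the same flavour of downward-closure bookkeeping as in the proof of \cref{clm:calC}, and it rests on \cref{lem:downward closure lemma}; the base case of \cref{lem:main} (the term/clause computation, essentially \hastad's argument) is the prototype for why conditioning on a downward-closed set is exactly the right thing here.
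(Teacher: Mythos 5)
Your proposal is essentially the paper's proof: the paper likewise discards everything in $\calA(\ell,t,b)$ except condition (ii), observes that the surviving event ``$\CDT^{(b)}_1(F_\ell,\brhotilde|_{F_\ell})$ exists'' depends only on $\brhotilde|_{F_\ell}$, whose marginal is $\calRtilde_{\ptilde(F_\ell)}(F_\ell)$, and invokes the inductive hypothesis of \cref{lem:main} for $F_\ell$ to get $\bigl(\ptilde(F_\ell)\cdot\lambda(F_\ell)\bigr)^t=(\nicefrac18)^t$. The conditioning-transfer step you single out as the heart of the matter is dispatched far more tersely in the paper --- it simply rewrites the conditional probability as one over $\brhotildel\sim\calRtilde_{\ptilde(F_\ell)}(F_\ell)$ still conditioned on $\calT$ and applies the induction hypothesis directly, without the reveal-the-external-data argument you sketch.
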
  
\begin{proof}
\begin{align*}
    \eta(\ell,t,b) & =  \prob{(\bsigma,\bStilde)\sim \dist}{\calA'(\ell, t,b) \mid \calT}\leq \prob{(\bsigma,\bStilde)\sim \dist}{\CDT^{(b)}_1(F_\ell, \brhotilde_{(\bsigma,\bStilde)}|_{F_\ell}) \text{ exists} \mid \calT}\\ 
    & = \prob{(\bsigma,\bStilde_{\boldsymbol{\ell}})\sim \calRtilde_{\ptilde(F_\ell)}(F_\ell)}{\CDT^{(b)}_1(F_\ell, \brhotilde_{(\bsigma,\bStilde_{\boldsymbol{\ell}})}) \text{ exists} \mid \calT }\\ 
    & \leq\left(\ptilde(F_\ell)\cdot \lambda(F_\ell)\right)^{t} =\left(\frac{1}{8\lambda(F_\ell)}\cdot \lambda(F_\ell)\right)^{t} = \left(\frac18\right)^t. 
\end{align*}
The last inequality follows from the induction assumption since $F_\ell$ has depth strictly smaller than that of $F$.
\end{proof}    

Plugging the results of these claims back into the the expression in \eqref{eq:summand}, we have
\[ \eta(\ell, t, b)\cdot (16\cdot p \cdot \lambda(F_\ell))^{r} \cdot (p\cdot \lambda(F))^{s-r} \leq \left(\frac18\right)^t \cdot \left(8\cdot p \cdot \lambda(F_\ell)\right)^{r} \cdot (p\cdot \lambda(F))^{s-r}\]


We need to bound the sum of this expression when summed over all $r,\ell, t, b, Q$ as given by \eqref{eq:union}. However, even if we just over all possible $\ell$ the sum turns out to be prohibitively expensive. To keep this sum over $\ell$ (and also $r,t,b,Q$) under control, we further observe that the events $\calA'(\ell,t,b)$ are mutually disjoint over disjoint $\ell, t, b$. This lets us conclude the following claim.

\begin{claim}\label{clm:subdistribution}    
\[ \sum_{\ell, t, b} \eta(\ell, t,b) = \sum_{\ell,t,b} \prob{(\bsigma,\bStilde)\sim \dist}{\calA'(\ell,t,b) \mid \calT} \leq 1. \]
\end{claim}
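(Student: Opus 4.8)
The plan is to exhibit, for each triple $(\ell,t,b)$ with $\calA(\ell,t,b)$ nonempty, a set $E_{\ell,t,b}$ of restriction trees such that (i) $\calA(\ell,t,b) \cap \calT \subseteq E_{\ell,t,b} \cap \calT$, and (ii) the sets $E_{\ell,t,b}$ are pairwise disjoint across distinct $(\ell,t,b)$. Given this, $\sum_{\ell,t,b} \mu(\calA(\ell,t,b) \cap \calT) \leq \sum_{\ell,t,b}\mu(E_{\ell,t,b}\cap\calT) \leq \mu(\calT)$, and dividing by $\mu(\calT)$ gives the claim. The natural candidate is $E_{\ell,t,b} := \{\brhotilde : \calA(\ell,t,b) \text{ holds}\}$ itself; so the real content is the disjointness assertion already flagged in the text, namely that the events $\calA(\ell,t,b)$ are mutually disjoint over distinct $(\ell,t,b)$.

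First I would fix a restriction tree $\brhotilde$ and argue that at most one triple $(\ell,t,b)$ can satisfy $\calA$. The index $\ell$ is pinned down by condition $\calA$\ref{item:Ai} together with the requirement (implicit in the Unpacking Lemma, see \cref{item:Aii,item:Aiv}) that $F_\ell|_{\brhotilde(F)}\not\equiv\text{constant}$: it is the least index $\ell'$ with $F_{\ell'}|_{\brhotilde(F)}\not\equiv 0$, and the sub-case of \cref{def:CDT AC0} in force forces $F_\ell|_{\brhotilde(F)}\not\equiv 1$, hence $\ell$ is determined by $\brhotilde$ alone. Once $\ell$ is fixed, the tree $\boldsymbol{\Gamma}=\CDT(F_\ell,\brhotilde|_{F_\ell})$ is a deterministic function of $\brhotilde$; condition $\calA$\ref{item:Aii} says the walk along $\boldsymbol{\Gamma}$ using instruction string $b$ ends at a $1$-leaf $\bbeta$. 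Since following instructions on a fixed tree is deterministic and since $\calA$\ref{item:Aii} requires the walk to \emph{terminate exactly at a leaf} after reading all of $b$ (so $b$ cannot be a strict prefix of another valid instruction string), the string $b$ is uniquely recoverable as the sequence of choices at degree-$2$ nodes on the unique root-to-leaf path that $\bbeta$ determines; in particular its length $t$ (the number of degree-$2$ nodes on that path) is determined too. Hence $(\ell,t,b)$ is a function of $\brhotilde$, which is exactly disjointness.

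With disjointness in hand the proof is two lines: $\sum_{\ell,t,b}\mu(\calA(\ell,t,b)\cap\calT) = \mu\bigl(\bigsqcup_{\ell,t,b}(\calA(\ell,t,b)\cap\calT)\bigr)\leq \mu(\calT)$, so $\sum_{\ell,t,b}\prob{}{\calA(\ell,t,b)\mid\calT}\leq 1$. I expect the only delicate point to be the recovery of $b$ (and therefore $t$) from $\brhotilde$: one must be careful that the walk in $\calA$\ref{item:Aii} is required to end precisely on a leaf, so that no two distinct instruction strings $b\neq b'$ both lead to $1$-leaves via walks compatible with a single $\brhotilde$ — this uses that in the $\CDT$, once a leaf is reached the walk halts, so a valid $b$ is never a proper prefix of another valid $b'$, and two valid strings of the same length that both reach leaves must label the same path. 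Everything else is bookkeeping with \cref{def:CDT AC0} and \cref{lem:unpacking}.
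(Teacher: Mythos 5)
Your overall strategy---show that the events $\calA(\ell,t,b)$ are pairwise disjoint over distinct triples and then bound $\sum_{\ell,t,b}\mu(\calA(\ell,t,b)\cap\calT)\leq\mu\bigl(\bigcup_{\ell,t,b}\calA(\ell,t,b)\cap\calT\bigr)\leq\mu(\calT)$---is exactly the paper's (the paper records the disjointness as an observation and deduces the claim from it with no further argument). Your recovery of $\ell$ is also fine: $\calA$\ref{item:Ai} together with the fact that \ref{item:Aii} and \ref{item:Aiii} force $F_\ell|_{\brhotilde(F)}\not\equiv 0$ pins $\ell$ down as the least index whose disjunct is not killed by $\brhotilde(F)$.

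The gap is in your recovery of $(t,b)$. You argue that condition $\calA$\ref{item:Aii} alone determines $b$, on the grounds that ``two valid strings of the same length that both reach leaves must label the same path.'' That is false: the tree $\boldsymbol{\Gamma}=\CDT(F_\ell,\brhotilde|_{F_\ell})$ may have many $1$-leaves, each reached by its own instruction string, and prefix-freeness does nothing to exclude two distinct strings (of equal or different lengths) reaching two distinct $1$-leaves. So \ref{item:Aii}, even together with \ref{item:Aiii}, can hold simultaneously for many pairs $(t,b)$ on a single $\brhotilde$; if disjointness rested on \ref{item:Aii} alone, the sum over $b$ would count the expected number of surviving $1$-leaves, which can vastly exceed $1$. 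What actually forces uniqueness is condition $\calA$\ref{item:Aiv}, which you never invoke: the leaf $\balphap$ of the $0$-balanced tree $\boldsymbol{\Gamma''}$ reached by the \emph{fixed, global} instruction string $a$ is a deterministic function of $\brhotilde$ and $\ell$, and by \cref{claim:pivot claim} the requirement $\balphap\in\ZASS(F_\ell,\brhotilde(F),\bbeta)$ says precisely that $\bbeta=\ass(\balphap)$. Since $\ass$ is a function and distinct leaves of a decision tree carry distinct ordered restrictions, $\bbeta$---and hence the instruction string $b$ reaching it in $\boldsymbol{\Gamma}$ and its length $t$---is uniquely determined by $\brhotilde$ and $\ell$. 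With \ref{item:Aiv} substituted for \ref{item:Aii} as the source of uniqueness, the rest of your argument goes through.
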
  
\begin{proof}
We observe that given a $(\bsigma,\bStilde)$, there is at most one 
$\ell$ such that $F_{\ell'}|_{\brhotilde(F)}\equiv 0$ and $F_\ell|_{\brhotilde(F)} \not\equiv 0$. Fix such an $\ell$ (if one exists). Given this $\ell$, there is exactly one root-to-leaf path in $\CDT(F_\ell,\brhotilde|_{F_\ell})$ that is consistent with $\bsigma$. Let this be $\bbeta$ (if one exists). Let $t := |\dom(\beta) \cap \stars(\brhotilde(F_\ell))|$ and $b \in \zo^t$ the assignment to the $t$ degree-2 variables along the ordered restriction $\bbeta$. Thus, $(\bsigma,\bStilde)$ uniquely determines $(\ell,t,b)$ such that $\calA'(\ell,t,b)$ hold. Hence, $ \eta(\ell, t,b)$ is a sub-distribution.
\end{proof}

We now have all the ingredients to bound the quantity of concern. The rest of the proof is a roller-coaster ride along the Jensen highway. We now bound the quantity in \eqref{eq:union} as follows:

\begin{align}
\Pr_{(\bsigma,\bStilde)}\left[\CDT^{(a)}_0(F,\brhotilde_{(\bsigma,\bStilde)}) \text{ exists } \mid (\bsigma,\bStilde) \in \calT\right] &\leq \sum_{r,\ell,t,b,Q} 2^r \cdot \eta(\ell, t, b)\cdot \left(8\cdot p \cdot \lambda(F_\ell)\right)^{r} \cdot (p\cdot \lambda(F))^{s-r}\nonumber\\
& \leq \sum_{r,\ell,t,b} \eta(\ell, t, b)\cdot \left(16\cdot p \cdot \lambda(F_\ell)\right)^{r} \cdot (p\cdot \lambda(F))^{s-r}\cdot t^r\nonumber\\
& = \left(p \cdot \lambda(F)\right)^s \cdot \sum_{r,\ell} \left[\left(\frac{16\cdot \lambda(F_\ell)}{\lambda(F)}\right)^r \cdot \nu(\ell) \cdot \underbrace{\sum_{t,b} \frac{\eta(\ell,t,b)}{\nu(\ell)} \cdot t^r}\right]\label{eq:tr}
\end{align}
where $\nu(\ell) := \sum_{t,b}\eta(\ell,t,b)$. We only sum over those $\ell$ that satisfy $\nu(\ell) >0$. Observe that $\sum_\ell \nu(\ell) = \sum_{\ell,t,b} \eta(\ell,t,b) \leq1$. We first bound the quantity indicated (using underbraces) in the above expression using Jensen's inequality and \cref{clm:multb} as follows.
\begin{subclaim}
\[\sum_{t,b} \frac{\eta(\ell,t,b)}{\nu(\ell)} \cdot t^r \leq  \left(\log\left(\frac1{\nu(\ell)}\right)\right)^r.\]
\end{subclaim}
\begin{proof}
Rewriting $t^r$ as $(\log 2^t)^r$, the lefthand side can be written as $\sum_{t,b} \frac{\eta(\ell,t,b)}{\nu(\ell)} \cdot (\log 2^t)^r$. Since $\sum_{t,b} \frac{\eta(\ell,t,b)}{\nu(\ell)} = 1$, we can apply Jensen's inequality to the concave function $x \mapsto (\log x)^r$ to obtain
\begin{align*}
\sum_{t,b} \frac{\eta(\ell,t,b)}{\nu(\ell)} \cdot \left[\log 2^t\right]^r  & \leq \left[ \log\left(\sum_{t,b}  \frac{\eta(\ell,t,b)}{\nu(\ell)} \cdot 2^t \right)\right]^r\\
&\leq \left[ \log\left(\sum_{t,b}  \frac{1}{\nu(\ell)\cdot 4^t}  \right)\right]^r & [\text{Since $\eta(\ell,t,b) \leq 8^{-t}$ from \cref{clm:multb}}]\\
&\leq \left[ \log\left(\frac{1}{\nu(\ell)}\sum_{t}  \frac{1}{2^t}  \right)\right]^r&[\text{Since there are at most $2^t$ $b$'s}]\\
& \leq \left[ \log \frac1{\nu(\ell)}\right]^r \tag*{\qedhere}
\end{align*}
\end{proof}    
Substituting this bound back into the expression \eqref{eq:tr} above, we obtain
\begin{align*}
\Pr\left[\CDT^{(a)}_0(F,\brhotilde) \text{ exists} \mid \brhotilde \in \calT\right] &\leq  \left(p \cdot \lambda(F)\right)^s \cdot \sum_{r} \left[\left(\frac{16}{\lambda(F)}\right)^r \cdot \sum_\ell \nu(\ell) \cdot \left(\underbrace{\lambda(F_\ell)\cdot \log\left(\frac1{\nu(\ell)}\right)}\right)^r\right]
\end{align*}
We now apply AM-GM inequality and the definition of $\lambda(F_\ell)$ to bound the indicated quantity.
\begin{subclaim} Let $S_\ell := \size(F_\ell)$. Then,
\[\lambda(F_\ell)\cdot \log\left(\frac1{\nu(\ell)}\right) \leq 32^{d}\left[\frac{\log\left(\nicefrac{2^{d-1} \cdot S_\ell}{\nu(\ell)}\right)}{d}\right]^{d}.\]   
\end{subclaim}
\begin{proof}
    If $d = 1$, then $S_\ell = 1$ (recall the `size' defined in \cref{def: AC0}), $\lambda(F_\ell) = 32$ (recall \cref{def:criticality}). Thus, both sides of the above claim simplify to $32 \cdot \log(\nicefrac{1}{\nu(\ell)})$. 

    For larger $d$, 
    \begin{align*}
    \lambda(F_\ell)\cdot\log\left(\frac1{\nu(\ell)}\right) &= 32^{d} \left(\frac{\log 2^{d-1}\cdot S_\ell}{d-1}\right)^{d-1}\cdot \log\left(\frac1{\nu(\ell)}\right)&[\text{Since } F_\ell \in \AC^0[S_\ell,d]]\\
    & \leq 32^{d}\left[\frac{\log (2^{d-1} \cdot S_\ell) +\log\left(\frac1{\nu(\ell)}\right)}{d}\right]^{d} &[\text{Applying AM-GM inequality}]\\
    & = 32^{d}\left[\frac{\log\left(\nicefrac{2^{d-1} \cdot S_\ell}{\nu(\ell)}\right)}{d}\right]^{d}. 
    \end{align*}
\end{proof}

\noindent
Plugging this bound back into our expression, we have
\begin{align*}
\Pr\left[\CDT^{(a)}_0(F,\brhotilde) \text{ exists} \mid \brhotilde \in \calT\right] &\leq  \left(p \cdot \lambda(F)\right)^s \cdot \sum_{r} \left[\left(\frac{16\cdot 32^{d}}{\lambda(F)}\right)^r \cdot \underbrace{\sum_\ell \nu(\ell) \cdot \left(\frac{\log\left(\frac{2^{d-1} \cdot S_\ell}{\nu(\ell)}\right)}{d}\right)^{dr}}\right]
\end{align*}
We bound the indicated quantity using yet another application of Jensen's inequality using the fact that $\sum_\ell \nu(\ell) \leq 1$ as follows.
\begin{subclaim}
\[\sum_\ell \nu(\ell) \cdot \left(\frac{\log\left(\frac{2^{d-1} \cdot S_\ell}{\nu(\ell)}\right)}{d}\right)^{dr}\leq \left[ \frac{\log\left(2^{d}\cdot S\right)}{d}\right]^{dr}.\]
\end{subclaim}
\begin{proof}
Recall that $\sum_\ell \nu(\ell) \leq 1$. Consider the random variable $Y$ defined as follows:
\newcommand{\bY}{\boldsymbol{Y}}
\begin{align*}
\bY \gets \begin{cases}
    \frac{2^{d-1} \cdot S_\ell}{\nu(\ell)} &\text{with probability } \nu(\ell) \text{ for each }\ell \text{ such that }\nu(\ell) \neq 0,\\
    1 &\text{with probability } 1- \sum_\ell \nu(\ell).
\end{cases}         
\end{align*}
and the concave function $x \overset{f}{\longmapsto} \left(\frac{\log x}{d}\right)^{dr}$. Applying Jensen's inequality, we obtain
\begin{align*}
    \E[f(\bY)] \leq f(\E \bY) &= \left[\frac{\log\left((\sum_\ell 2^{d-1}\cdot S_\ell) + (1-\sum_\ell \nu(\ell))\right)}{d}\right]^{dr}\\
    &\leq  \left[\frac{\log\left(2^{d-1}\cdot S + 1\right)}{d}\right]^{dr} &[\text{Since $S = \sum_\ell S_\ell$}]\\
    &\leq  \left[\frac{\log\left(2^{d}\cdot S \right)}{d}\right]^{dr} &[\text{Since $S \geq 1$}]
\end{align*}    
\end{proof}    
Plugging this bound back into our expression and recalling that $\lambda(F)= 32^{d+1}\cdot\left(\frac{\log 2^{d}\cdot S}{d}\right)^{d}$, we obtain
\begin{align*}
    \Pr\left[\CDT^{(a)}_0(F,\brhotilde) \text{ exists} \mid \brhotilde \in \calT\right] &\leq  \left(p \cdot \lambda(F)\right)^s \cdot \sum_{r} \frac1{2^r} \leq (p\cdot \lambda(F))^s,
\end{align*}
which completes the proof of our main lemma.
\end{proof}

\section{Satisfiablity algorithms}\label{sec:satisfiability}
In this section, we a give a randomized $\#SAT$ algorithm for general $\AC^0$ formulae, matching the Impagliazzo-Matthews-Paturi result for $\AC^0$ circuits. Rossman \cite{Rossman2019} had obtained a similar result for regular formulae. The proof below is a verbatim adaptation of Rossman's corresponding result \cite[Theorem 30]{Rossman2019} for regular formulae to the general setting. 
    \begin{theorem}
    \label{thm:AC0_SAT_algo}
    There is a randomized, zero-error algorithm which, given an $\AC^0$
    formula $F$ of depth $d + 1$ and size $S$ on n variables, outputs a decision tree for $F$ of size
    $O\left(Sn \cdot 2^{(1-\epsilon)n}\right)$ where $\epsilon = 1/ \bigo{\cbra{\frac1d \log S}^d}$. 
    This algorithm also solves the $\#SAT$ problem, that
    is, it counts the number of satisfying assignments for $F$.
    \end{theorem}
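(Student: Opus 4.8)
The plan is to follow, essentially verbatim, the recursive decision-tree construction of Impagliazzo--Matthews--Paturi \cite{ImpagliazzoMP2012} in the reorganization of Rossman \cite{Rossman2019} for regular formulae, substituting our general criticality bound \cref{thm:AC0criticality} for its regular-formula counterpart. The algorithm recursively builds an honest decision tree $T$ that computes $F$; once $T$ is in hand, the $\#SAT$ count is immediate, since each $1$-leaf at depth $h$ accounts for exactly $2^{n-h}$ satisfying assignments of $F$, so the count is read off by one traversal of $T$. Thus the entire task is to produce a decision tree of the claimed size.

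Fix $p := \Theta(1/\lambda(F))$ so that $c := p\cdot\lambda(F)$ is a sufficiently small absolute constant, and set $t := \Theta(pn)$. On input $F$ (on $n$ variables, depth $d+1$, size $\le S$): if $n$ is below a fixed constant, return the complete decision tree; otherwise sample $\mathbf{R}\subseteq[n]$ by placing each variable in $\mathbf{R}$ independently with probability $1-p$, and (conditioning on the high-probability event $|\mathbf{R}|\le(1-p/2)n$) form the complete decision tree querying exactly the variables of $\mathbf{R}$. For each of its leaves $\sigma\colon\mathbf{R}\to\{0,1\}$, let $\rho_\sigma$ be the restriction that sets $\mathbf{R}$ according to $\sigma$ and leaves $[n]\setminus\mathbf{R}$ free, run the $\CDT(F,\rho_\sigma)$ construction of \cref{sec:CDT}, and abort the construction as soon as its depth would reach $t+1$: if it completes, the contracted tree computes $F|_{\rho_\sigma}$, has depth $\le t$, hence size $2^t\cdot\mathrm{poly}(S,n)$, and is attached at $\sigma$; if it aborts, recursively run the algorithm on $F|_{\rho_\sigma}$ (a formula of depth $\le d+1$, size $\le S$, on $\le pn$ variables) and attach the result at $\sigma$. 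The crucial observation is that the law of $\rho_\sigma$ under ``random $\mathbf{R}$, then uniform $\sigma$'' is exactly $\calR_p$, so \cref{thm:AC0criticality} gives $\E_{\mathbf{R}}\Pr_{\sigma}[\DT_{\depth}(F|_{\rho_\sigma})>t\mid\mathbf{R}] = \Pr_{\brho\sim\calR_p}[\DT_{\depth}(F|_{\brho})>t]\le c^{t}$. Letting $M(n)$ be the worst-case expected output size over such $F$, this yields the recurrence
\[ M(n)\ \le\ \mathrm{poly}(S,n)\cdot 2^{(1-p/2)n}\,\bigl(2^{t}+c^{t}M(pn)\bigr). \]
Choosing $t=\Theta(pn)$ and $\epsilon=\Theta(p)=1/\bigo{\cbra{\frac1d\log S}^{d}}$ appropriately (using that $c<1$ is a small constant), an induction on $n$ — over the $O(\log n)$ levels of recursion, with the $\mathrm{poly}(S,n)$ prefactor contributing only a constant factor per level and the $\CDT$-construction at each leaf contributing at most $O(S)$ — solves this to $M(n)\le\bigo{Sn\cdot 2^{(1-\epsilon)n}}$. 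Finally, for a zero-error guarantee the algorithm simply measures its own output size and, by Markov's inequality, reruns until the size meets the bound; each attempt succeeds with constant probability, so after $O(1)$ expected attempts it outputs a (always correct) decision tree of size $\bigo{Sn\cdot 2^{(1-\epsilon)n}}$, from which $\#SAT$ follows by the traversal above.

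Since this is a near-verbatim port of Rossman's argument, the substantive new ingredient — criticality $\bigo{\cbra{\frac1d\log S}^{d}}$ for \emph{arbitrary} (non-regular) depth-$(d+1)$ $\AC^0$-formulae — is precisely \cref{thm:AC0criticality}, already proved via \cref{lem:main}, and the $\CDT$ used above is the one defined in \cref{sec:CDT}, which already applies to arbitrary formulae. Consequently the only thing that genuinely needs checking is that Rossman's algorithm and its size analysis use regularity \emph{solely} through the criticality bound: that the recursion, the abort-and-recurse bookkeeping, and the count of nodes introduced by the $\CDT$/$0$-balancing construction at each leaf all survive for general formulae. This is routine but is where care is warranted; the remaining ingredients — identifying ``random $\mathbf{R}$ then uniform $\sigma$'' with $\calR_p$ so that \cref{thm:AC0criticality} applies directly, and solving the recurrence — are standard.
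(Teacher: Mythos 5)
Your high-level plan (an Impagliazzo--Matthews--Paturi-style ``query most variables, build a canonical decision tree at each leaf, abort and recurse if it gets too deep'' recursion) is not the route the paper takes, and as instantiated it has a genuine gap in the abort step. You abort when the depth of $\CDT(F,\rho_\sigma)$ would exceed $t$, and you bound the abort probability by $\Pr_{\brho\sim\calR_p}[\DT_{\depth}(F|_{\brho})>t]\le c^t$ via \cref{thm:AC0criticality}. But \cref{thm:AC0criticality} controls the \emph{optimal} decision-tree depth of $F|_{\rho_\sigma}$, not the depth of the canonical tree your algorithm actually constructs, and nothing in the paper bounds the depth of $\CDT(F,\cdot)$ for a single restriction (i.e., the constant restriction tree $\rhotilde\equiv\rho_\sigma$). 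The depth bound on the CDT that underlies \cref{thm:AC0criticality} comes from \cref{lem:main}, which applies only to restriction trees drawn from $\dist$, where each interior subformula $G$ is restricted with the much larger $*$-probability $\ptilde(G)=1/8\lambda(G)$; the whole point of the restriction-tree machinery is that the sub-CDTs must be built under these milder restrictions and only then further restricted. With the constant restriction tree this multi-level structure collapses and no depth bound is available. To repair this you would have to sample, at each leaf $\sigma$, a full restriction tree coupled to $\rho_\sigma$ and run $\CDT(F,\brhotilde)$ — at which point you are essentially doing what the paper does. A secondary (fixable) issue: your remark that the $\mathrm{poly}(S,n)$ prefactor contributes ``only a constant factor per level'' over $O(\log n)$ recursion levels is not right as stated; the induction has to absorb these factors into the exponential savings, which is standard but not automatic.

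For comparison, the paper's proof is a one-shot construction with no truncation and no top-level recursion: it samples a random \emph{tree of subsets} $\bDtilde$ (with $i\in\bDtilde(G)$ iff $\btau_i\le 1-1/8\lambda(G)$), queries all of $\bDtilde(F)$ up front, and at each of the $\approx 2^{(1-1/8\lambda(F))n}$ leaves attaches the \emph{entire} $\CDT(F,\brhotilde_{\bDtilde,\sigma})$. The size is then controlled not by a depth-tail bound but by the expected-size bound
\[
\E\bigl[\size(\CDT(F,\brhotilde))\bigr]\;=\;\sum_{t\ge 0}\sum_{a\in\zo^t}\sum_{b\in\zo}\Pr\bigl[\CDT^{(a)}_b(F,\brhotilde)\text{ exists}\bigr]\;\le\;1+\sum_{t\ge 1}2^t\cdot\bigl(\tfrac18\bigr)^t\;=\;\tfrac43,
\]
which follows from \cref{lem:main} with $\ptilde(F)=1/8\lambda(F)$. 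This is the ingredient your proposal is missing: the full strength of \cref{lem:main} (per-instruction-set probability $(p\lambda)^t$ for the restriction-tree distribution), rather than the black-box criticality statement of \cref{thm:AC0criticality}, is what makes the algorithmic construction and its size analysis go through.
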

    \begin{proof}
        Given any depth $d$ formula, and restriction tree $\rhotilde$, the decision tree algorithm from \cref{def:CDT AC0} computes $\CDT(F,\rhotilde)$ in time $\bigo{n} \cdot \sum_{G \in T_F} \size(\CDT(G,\rhotilde|_{G}))$. 
        Given an $\AC^0$ formula, consider the following tree of subsets $\Dtilde\colon T_F \to [n]$ such that for each $G,H \in T_F$, such that $G$ is a parent of $H$,  $\Dtilde(H) \subseteq \Dtilde(G)$.  For each such $\Dtilde$, we get a decision tree for $F$ as follows: We first construct a decision tree $\Gamma$ by querying all the variables in $\Dtilde(F)$ and labelling each leaf with the corresponding restriction on $\Dtilde(F)$. For each such leaf $\sigma$ (i.e, for each choice $\sigma \colon \Dtilde(F) \to \zo$), we get a corresponding restriction tree $\rhotilde_{\Dtilde,\sigma}$ in the natural manner. For each such $\sigma$, construct $\CDT(F,\rhotilde_{\Dtilde,\sigma})$ and plug it in instead of the leaf corresponding to $\sigma$ in the complete binary tree $\Gamma$. Clearly, this resultant tree $\Gamma_{\Dtilde}$ is a decision tree for $F$. 
        
        We construct a (random) $\bGamma_{\bDtilde}$ by sampling a $\bDtilde$ as follows: randomly choose a $\btau \in_R [0,1]^n$ and set $\bDtilde(G):= \bra{i \colon \btau_i \leq 1 - 1/8\lambda(G)}$ for each $G \in T_F$. 
        Therefore the expected running time of the algorithm which computes the decision tree for $F$ is $O(n) \cdot {\sum_{G \in T_F} \E_{\btau}\left[\sum_{\sigma\colon \bDtilde(F) \to \zo}\size(\CDT(G,\brhotilde_{\bDtilde,\sigma}|_G))\right]}$ while the expected size of the decision tree is $\E_{\btau}\left[\sum_{\sigma\colon \bDtilde(F) \to \zo}\size(\CDT(F,\brhotilde_{\bDtilde,\sigma}))\right]$.
        
        We bound these expression as follows.       
        For each $G \in T_F$, size of the decision tree $\CDT(G,\rhotilde)$ is given by the expression, 
            \begin{align*}
                \E_{\btau}\left[\sum_{\sigma \colon \zo^{\bDtilde(F)}}\size(\CDT(F,\brhotilde_{\bDtilde,\sigma}))\right]
                & = \E_{\btau}\left[2^{|\bDtilde(F)|}\cdot \E_{\bsigma}\left[\size(\CDT(F,\brhotilde_{\bDtilde,\bsigma}))\right]\right]\\
                & \leq 2^{n(1-\nicefrac1{16\lambda(F)})}\cdot \underbrace{\E_{\btau,\bsigma}\left[\size(\CDT(F,\brhotilde_{\bDtilde,\bsigma}))\right]} + 2^n \cdot e^{-(\frac12)^2\cdot\frac12\cdot\frac{n}{8\lambda(G)}}
            \end{align*}
            where in the last expression, we have used the Chernoff Bound $\Pr[\sum X_i \leq (1-\delta)\mu]\leq e^{-\delta^2\mu/2}$ to bound the probability $\Pr[|[n] \setminus \bDtilde(F)| \leq (1-\nicefrac12)\mu]$ where $\mu = \E[|[n]\setminus \bDtilde(F)|]= \nicefrac{n}{8\lambda(F)}$. We can now further simplify the expression indicated in the underbraces as follows:
            \begin{align*}
                \E_{\btau,\bsigma}\left[\size(\CDT(F,\brhotilde_{\bDtilde,\bsigma}))\right]
                & = \sum_{t \geq 0} \sum_{a \in \zo^t}  \sum_{b \in \zo}\prob{\brhotilde_{\bDtilde,\bsigma}}{\CDT^{(a)}_b(F,\brhotilde_{\bDtilde,\bsigma}) \ \text{ exists}}\\
                & \leq 1 + \sum_{t =1}^{\infty} 2^t\left(\frac18\right)^t = \frac43.
            \end{align*}    

        We thus conclude that the expected size of the decision tree is at most $2^{n(1-\frac{1}{C\lambda})}$ for a suitably large constant $C$. 
    \end{proof}

\section*{Acknowledgements}

The first and the second authors spent several years thinking about this problem and we are indebted to several people along the way. First and foremost, we thank Jaikumar Radhakrishnan and Ramprasad Saptharishi for spending innumerable hours in the various stages of this project respectively going over various parts of the proof and giving us very helpful feedback. We are also greatly thankful to Ben Rossman both for initial discussions and pointing out an error in the previous version of this proof. In addition, we also thank Srikanth Srinivasan, Siddharth Bhandari, Yuval Filmus and Mrinal Kumar for their comments and feedback during the various stages of this project. 

{\small 
  \bibliographystyle{prahladhurl}
  \bibliography{HMS-bib.bib}
}

\end{document}